\documentclass[journal]{IEEEtran}
\usepackage{graphicx}
\usepackage{epstopdf}
\usepackage{cite}
\usepackage{algorithm}
\usepackage{algorithmic}
\usepackage{amsmath}
\usepackage{bm,comment,color,amssymb}
\usepackage{stfloats}
\usepackage{makecell}
\usepackage{comment}
\usepackage{xspace,epsfig,graphicx,subfigure,latexsym,fancyhdr,cite}
\usepackage[colorlinks,citecolor=blue,linkcolor=blue,urlcolor=blue]{hyperref}
\newtheorem{theorem}{Theorem}
\newtheorem{proposition}{Proposition}

\newtheorem{remark}{Remark}
\newtheorem{proof}{Proof}
\begin{document}

	\title{  Fluid Antennas Meet Intelligent Surfaces:\\ Security Analysis of   NOMA Systems Under Hardware Impairments}
	
	\author{Tuo Wu, Jianchao Zheng, Xiazhi Lai, Maged Elkashlan,  \\Hyundong Shin, \emph{Fellow}, \emph{IEEE}, and  Naofal Al-Dhahir, \emph{Fellow, IEEE}
		
	 \thanks{(\textit{Corresponding author: Jianchao Zheng.})}
	 \thanks{ T. Wu is with the School of Electronic and Information Engineering, South China University of Technology, Guangzhou 510640, China   (E-mail: $\rm wtopp0415@163.com$). J. Zheng is with the School of Computer Science and Engineering, Huizhou University, Huizhou 516000, China (E-mail: $\rm zhengjch@hzu.edu.cn$).  X. Lai is with the School of Computer Science, Guangdong University of Education, Guangzhou, Guangdong, China (E-mail: $\rm xzlai@outlook.com$). M. Elkashlan is with the School of Electronic Engineering and Computer Science at Queen Mary University of London, London E1 4NS, U.K. (E-mail: $\rm maged.elkashlan@qmul.ac.uk$). H. Shin is with the Department of Electronic Engineering, Kyung Hee University, Yongin-si, Gyeonggi-do 17104, Korea (E-mail: $\rm hshin@khu.ac.kr$). Naofal Al-Dhahir is with the Department of Electrical and Computer Engineering, The University of Texas at Dallas, Richardson, TX 75080 USA (E-mail: $\rm aldhahir@utdallas.edu$).   }   
		
	}
	
	\markboth{}
{Wu \MakeLowercase{\textit{et al.}}:  Fluid Antennas Meet Intelligent Surfaces: Security Analysis of NOMA Systems Under Hardware Impairments}
	
	\maketitle

	\begin{abstract} 
The revolutionary convergence of fluid antenna systems (FAS) and reconfigurable intelligent surfaces (RIS) creates unprecedented opportunities for secure wireless communications, yet the practical implications of hardware impairments on this promising combination remain largely unexplored. This paper investigates the security performance of non-orthogonal multiple access (NOMA) systems when fluid antennas (FAs) meet intelligent surfaces under realistic hardware constraints. We develop a comprehensive analytical framework that captures the complex interplay between adaptive spatial diversity, intelligent signal reflection, and hardware-induced distortions in short-packet communications. Through novel piecewise linear approximations and block-correlation models, we derive tractable expressions for average secure block error rate (BLER) that reveal fundamental performance limits imposed by hardware impairments. Our analysis demonstrates that while the synergy between FAs and intelligent surfaces offers remarkable degrees of freedom for security enhancement, practical hardware imperfections create performance ceilings that persist regardless of spatial diversity gains. The theoretical framework exposes critical design trade-offs between system complexity and achievable security performance, showing that hardware quality becomes a decisive factor in realizing the full potential of FAS-RIS architectures. Extensive simulations validate our analytical insights and provide practical design guidelines for implementing secure NOMA systems that effectively balance the benefits of fluid-intelligent cooperation against the constraints of realistic hardware limitations.
\end{abstract}
	\begin{IEEEkeywords}
		Fluid antenna systems (FAS), reconfigurable intelligent surfaces (RIS), non-orthogonal multiple access (NOMA), physical layer security, hardware impairments, secure block error rate (BLER), short-packet communications.
	\end{IEEEkeywords}
	\IEEEpeerreviewmaketitle

\section{Introduction}

\IEEEPARstart{T}{he} evolution of wireless communications has reached a pivotal moment where two transformative technologies are reshaping secure communications: fluid antenna systems (FAS) and reconfigurable intelligent surfaces (RIS). When fluid antennas (FAs) meet intelligent surfaces, they create unprecedented synergy for secure wireless networks \cite{FAS20,WongK22,NewW24,Zhu241,HHong26}. FAS technology offers dynamic spatial adaptability through multiple switchable ports, delivering exceptional spatial diversity and adaptive beamforming capabilities \cite{Zhu242,TWu3,GhadiF,WangC24,XLai23,LaiX242,HXu24,HXu25,JZheng24,JZheng241,JZheng242,YaoJ24,YaoJ25,NewW23,TWu4,TWu7,LZhou24,SYang25,HHong25,ZZhang26,ZZhang25,YZhu25,THan25}. Complementing this fluid adaptability, RIS transforms the wireless environment into an intelligent, programmable resource through precise control of reflection coefficients, enabling signal enhancement, interference mitigation, and coverage extension \cite{TWu1,TWu2,TWu9,YaoJ241,YaoJ242,YaoJ243,HChen26}.

The convergence of FAS and RIS finds compelling applications in secure non-orthogonal multiple access (NOMA) systems, where their combined adaptability and intelligence can enhance physical layer security by creating favorable conditions for legitimate users while degrading eavesdropper capabilities \cite{DingZ16,DingZ14}. However, the practical deployment of such systems faces significant challenges from hardware impairments that can substantially impact achievable security performance, creating a critical need to understand how these imperfections affect the synergistic benefits when FAs meet intelligent surfaces.

Recent research has explored various aspects of these emerging technologies. For FAS-RIS networks, key contributions include \emph{block-correlation modeling} for tractable analysis \cite{LaiX242,LaiX26,TWu6}, \emph{secrecy capacity derivations} for security applications \cite{JZheng241}, \emph{optimization frameworks} for active RIS scenarios \cite{JZheng242,YaoJ243}, \emph{performance bounds} and algorithms \cite{YaoJ241,YaoJ242,TWu5}, and \emph{spatial correlation characterization} using central limit theorem approximations \cite{GhadiF24}. The integration of FAS with NOMA has focused on enhancing spectral efficiency, with notable advances including \emph{BLER analysis} with exact and asymptotic expressions \cite{JZheng24,TWu4}, \emph{energy harvesting} investigations for wireless-powered systems \cite{GhadiF241}, and \emph{hardware impairment analysis} for multi-user FAS systems \cite{YaoJ25}. Similarly, RIS-assisted NOMA systems have been extensively investigated, with research contributions encompassing \emph{NOMA vs. OMA comparisons} \cite{ZhengB20}, \emph{multi-RIS optimization} strategies \cite{ChengY21,ChengY212}, \emph{short-packet BLER derivations} \cite{VuT22}, and \emph{STAR-RIS characterization} for URLLC applications \cite{XuJ23}. Physical layer security research has increasingly focused on short-packet communications and hardware impairments (HIs) modeling \cite{ZhouG21,ChuZ22,XiaC24}, primarily modeling hardware imperfections as \emph{additive distortion noise} scaling with signal power.

Despite these advances, the comprehensive analysis of FAS-RIS NOMA systems considering both physical layer security and  HIs  remains largely unexplored. This research gap motivates our investigation into how hardware imperfections affect the synergistic benefits when FAs meet intelligent surfaces in secure communication scenarios.

The complexity of analyzing such systems under HIs stems from several interconnected factors. First, the spatial correlation characteristics of FA ports become significantly more complex when coupled with hardware-induced distortions, requiring novel analytical approaches to capture their joint effects on system performance. Second, the intelligent signal manipulation capabilities of RIS are inherently limited by practical hardware imperfections such as phase noise, amplitude errors, and quantization effects, which can substantially degrade the intended signal enhancements. Third, the sophisticated power allocation and successive interference cancellation processes in NOMA systems become particularly vulnerable to HIs, as these imperfections can cascade through the decoding process and compromise security guarantees.

The fundamental question that emerges is whether the remarkable degrees of freedom offered by the FAS-RIS combination can effectively compensate for the performance degradation introduced by HIs, and under what conditions this compensation is most effective. This question is particularly critical for secure communications, where even small performance degradations can have significant implications for system security and reliability. To address these challenges, this paper develops a comprehensive analytical framework that captures the complex interactions between FA adaptability, intelligent surface control, and hardware-induced limitations in secure NOMA systems. The key contributions of this paper are summarized as follows:
	\begin{itemize}
		\item We derive novel closed-form expressions for the average secure BLER of the central user (CU) by employing piecewise linear approximations for BLER functions and leveraging block-correlation approximation (BCA) models for FAS spatial correlation. Our analysis addresses the computational challenges posed by double integrals and correlated decoding errors in successive interference cancellation (SIC) processes, providing tractable expressions that capture the essential system behavior while maintaining analytical accuracy.
		
		\item We provide the first comprehensive analysis of HIs effects on secure FAS-RIS NOMA systems, quantifying how radio frequency (RF) front-end imperfections degrade both reliability and security performance. Our analysis reveals critical design trade-offs and establishes fundamental limits on achievable performance under practical hardware constraints, offering essential insights for system optimization and deployment.
		
		\item Through extensive theoretical analysis and numerical evaluation, we establish practical design guidelines for secure FAS-RIS NOMA systems, including optimal power allocation strategies, FAS port selection criteria, and hardware quality requirements. Our results demonstrate that the proposed system can achieve significant performance improvements over conventional approaches even under severe HIs, providing a roadmap for practical implementations.

	\end{itemize}

\section{System Model}

{\begin{table}[h]
\centering
\caption{Summary of Key Notations}
\begin{tabular}{ll}
\hline
\textbf{Symbol} & \textbf{Description} \\
\hline
$M$ & Number of RIS reflecting elements \\
$L$ & Number of FAS ports \\
$W$ & Normalized FAS antenna size (in $\lambda$) \\
$P$ & Total BS transmit power \\
$a_C, a_E$ & NOMA power allocation coefficients \\
$\varrho_C^2, \varrho_E^2$ & HI distortion levels (CU, EU) \\
$\sigma^2$ & AWGN noise variance \\
$\alpha$ & Path loss exponent \\
$d_{SR_i}, d_{R_i X}$ & BS-to-RIS and RIS-to-user distances \\
$\gamma_O^{(X)}$ & Optimal FAS channel amplitude ($X \in \{C, E\}$) \\
$\gamma_{C,E}, \gamma_{C,C}$ & SINRs for SIC decoding at CU \\
$\gamma_{E,C}, \gamma_{E,E}$ & SINRs at eavesdropper EU \\
$N_c, N_e$ & Message sizes for CU and EU (bits) \\
$m$ & Blocklength (symbols) \\
$\delta$ & Information leakage level \\
$\Psi, \Phi, \Xi$ & BLER components (SIC, secure, interference) \\
$B$ & Number of BCA blocks \\
$U, \tilde{U}$ & Gauss-Chebyshev quadrature orders \\
\hline
\end{tabular}
\end{table}}

To investigate the security performance when FAs meet intelligent surfaces under HIs, we consider a downlink FAS-RIS assisted NOMA secure communication system. The system comprises a  BS with a single fixed-position antenna, a  CU, and EU. In practical urban environments, direct communication links are often blocked by buildings, trees, and other infrastructure. To address this challenge and realize the synergistic benefits of FAs and intelligent surfaces, the system is aided by two RISs, denoted as RIS-1 and RIS-2, each equipped with \(M\) reflecting elements. We assume a practical scenario where direct links from the BS to users are obstructed, and communication is established exclusively through BS-RIS-user paths. Specifically, RIS-1 assists the BS-to-CU link, while RIS-2 assists the BS-to-EU link. The distances from the BS to RIS-1 and from RIS-1 to the CU are \(d_{SR_1}\) and \(d_{R_1C}\), respectively. Similarly, the distances for the EU's link are \(d_{SR_2}\) and \(d_{R_2E}\).

{The two-RIS architecture is motivated by practical deployment considerations in urban environments. In such scenarios, RIS panels are deployed on building facades at strategic locations to create alternative communication paths. The use of separate RIS for the CU and EU links can represent: (i) \emph{spatially separated reflectors}, where the CU and EU are located in different directions from the BS and separate RIS panels deployed on different buildings naturally serve each user's path independently; (ii) \emph{logical partitioning of a large surface}, where a single large RIS is logically divided into sub-arrays with different phase shift profiles to simultaneously serve multiple user directions, with $M$ elements allocated to each partition; or (iii) a \emph{worst-case security model}, where a dedicated RIS-2 for the eavesdropper represents the scenario in which the EU has optimal RIS assistance---in practice, an eavesdropper may have limited or no dedicated RIS support, resulting in better security performance than our conservative analysis predicts. The direct link blockage assumption is widely adopted in RIS-assisted communication literature \cite{VuT22,XuJ23,ChengY21} and represents the primary deployment scenario where RIS provides the most significant benefits.}

Both the CU and EU are equipped with a single  FA  capable of selecting the best receiving port out of \(L\) available ports. These ports are distributed uniformly over a linear space of size \(W\lambda\), where \(\lambda\) is the carrier wavelength and \(W\) is the normalized antenna size.  We also account for HIs (HI) at both users, modeling their effect as additive distortion noise that scales with the received signal power \cite{ZhouG21,ChuZ22,XiaC24}.

The BS employs NOMA for downlink transmission in a short-packet regime to support ultra-reliable low-latency communication (URLLC) applications. A message of $N_c$ bits for the CU and a message of $N_e$ bits for the EU are encoded into a block of $m$ symbols. The superposed signal transmitted by the BS is given by
\begin{align}\label{aq1}
	x=\sqrt{a_C P}x_C+\sqrt{a_E P}x_E,
\end{align}
where \(x_C\) and \(x_E\) are the unit-power signals for the CU and EU, respectively (i.e., \(\mathbb{E}[|x_C|^2] = 1\) and \(\mathbb{E}[|x_E|^2] = 1\)). The power allocation coefficients satisfy \(a_C + a_E = 1\), with \(a_E > a_C\) to ensure proper NOMA decoding order, and \(P\) is the total transmit power at the BS.

For the CU, the received signal at the $l$-th FAS port, after passing through RIS-1, is expressed as 
\begin{align}\label{aq2}
y^{(C)}_{l}=&( \sum_{m=1}^M h^{(C)}_m v^{(C)}_{m,l} e^{-j\pi\theta^{(C)}_m})\left(d_{SR_1}d_{R_1C}\right)^{-\frac{\alpha}{2}}x+z^{(C)}_{l}+z'_{l},
\end{align}
where \(h^{(C)}_m \sim \mathcal{CN}(0, \epsilon^{(C)}_1)\) is the Rayleigh fading channel from the BS to the \(m\)-th element of RIS-1, and \(v^{(C)}_{m,l} \sim \mathcal{CN}(0, \epsilon^{(C)}_2)\) is the channel from that element to the $l$-th port of the CU's FA. The AWGN at the receiver is \(z'_{l} \sim \mathcal{CN}(0, \sigma^2)\). The term \(z^{(C)}_{l}\) represents the distortion noise due to HIs at the CU, which is modeled as $z^{(C)}_{l}\sim\mathcal{CN}\left(0, \varrho_C^2 P \left(\sum_{m=1}^M|h^{(C)}_m||v^{(C)}_{m,l}|\right)^2 \left(d_{SR_1}d_{R_1C}\right)^{-\alpha}\right)$,
where $\varrho_C^2$ is the HIs level that characterizes the quality of the RF front-end components \cite{ZhouG21,ChuZ22,XiaC24}. 

{\begin{remark}[Channel Model and CSI Assumptions]
We adopt Rayleigh flat fading for the BS-RIS and RIS-user channels, which is the standard model for rich-scattering NLOS environments and has been widely used in RIS-aided communication analysis \cite{VuT22,XuJ23,ZhouG21,ChuZ22,XiaC24}. This assumption is particularly appropriate for our system model where direct BS-user links are blocked (necessitating the RIS-assisted paths), as the absence of a dominant line-of-sight component naturally leads to Rayleigh-distributed channel amplitudes. The proposed analytical framework can be extended to Rician fading by modifying the channel statistics (mean and variance of $|h_m^{(X)}||v_{m,l}^{(X)}|$) while preserving the overall piecewise linear approximation and block-correlation model structure. The assumption of perfect CSI for RIS phase optimization serves as a \emph{worst-case security benchmark}: since perfect phase alignment at RIS-2 maximizes the eavesdropper's received signal quality, any practical CSI imperfection at the eavesdropper would only degrade its decoding capability, thereby \emph{improving} the security performance for the legitimate CU.
\end{remark}}

{In practice, transceiver hardware impairments arise from multiple sources including phase noise in local oscillators, in-phase/quadrature (I/Q) imbalance, high-power amplifier (HPA) nonlinearities, and analog-to-digital converter (ADC) quantization errors. The aggregate effect of these individual impairment sources has been rigorously shown to be well-approximated by an additive Gaussian distortion noise whose variance scales with the instantaneous signal power \cite{Schenk08,Bjornson13}. This model---adopted in our work following \cite{ZhouG21,ChuZ22,XiaC24}---provides a tractable yet statistically accurate characterization that captures the essential performance degradation without requiring detailed knowledge of each individual impairment source. The HI level parameter $\varrho^2$ serves as a comprehensive quality metric that can be experimentally calibrated for specific hardware implementations.}

The RIS phase shifts are optimized to maximize the received signal power by aligning the cascaded channels. Specifically, the optimal phase shift for the $m$-th RIS element is given by
\begin{align}
\theta^{(C)}_m = -\text{arg}(h^{(C)}_m v^{(C)}_{m,l}).
\end{align}
This co-phasing strategy results in a real-valued channel amplitude at the \(l\)-th port, which is given by $\gamma^{(C)}_l= \sum_{m=1}^M|h^{(C)}_m||v^{(C)}_{m,l}|$. Thus, the FA selects the port with the maximum channel gain to achieve optimal performance, yielding the optimal channel amplitude 
\begin{align}
	\gamma^{(C)}_O=\mathop{\text{max}}\limits_{l=1,\cdots, L} \gamma^{(C)}_l.
\end{align}
Under the assumption of independent Rayleigh fading channels, the mean and variance of the product \(|h^{(C)}_m||v^{(C)}_{m,l}|\) are given by $\mathbb{E}[|h^{(C)}_m||v^{(C)}_{m,l}|]=\frac{\pi}{4}\sqrt{\epsilon^{(C)}_1\epsilon^{(C)}_2}$, $\mathbf{Var}[|h^{(C)}_m||v^{(C)}_{m,l}|]=\epsilon^{(C)}_1\epsilon^{(C)}_2\left(1-\frac{\pi^2}{16}\right)$.

{\begin{remark}[RIS Phase Quantization]
In practical RIS implementations, the continuous phase shifts $\theta_m^{(X)}$ are quantized to $b$-bit resolution. With $b$-bit phase quantization, the effective channel gain after co-phasing is scaled by a factor $\kappa_b = \mathrm{sinc}(\pi/2^b)$, reducing the optimal channel amplitude to $\gamma_O^{(X),\mathrm{quant}} = \kappa_b \cdot \gamma_O^{(X)}$. This scaling can be directly incorporated into our framework by replacing $|\gamma_O^{(X)}|^2$ with $\kappa_b^2 |\gamma_O^{(X)}|^2$ in all SINR expressions. For $b \geq 3$ bits, $\kappa_b^2 > 0.95$, indicating that the performance loss due to phase quantization is minimal, which justifies our continuous phase shift assumption as a close approximation to practical implementations with moderate quantization resolution. Beyond phase quantization, practical RIS implementations may also exhibit amplitude-phase coupling, inter-element mutual coupling, and phase noise/drift. These effects can be incorporated by modifying the cascaded channel model: the ideal co-phasing gain would be replaced by $\gamma_l^{(X),\mathrm{imp}} = \sum_{m=1}^M \rho_m |h_m^{(X)}||v_{m,l}^{(X)}| e^{j\Delta\phi_m}$, where $\rho_m \leq 1$ captures amplitude loss and $\Delta\phi_m$ captures residual phase error. The CLT approximation remains applicable to this modified sum (with adjusted mean and variance), preserving the overall framework structure. A detailed treatment of these RIS nonidealities in the FAS-RIS-NOMA context is an important direction for future work.
\end{remark}}

The CU performs successive interference cancellation (SIC) to decode the multiplexed signals. First, it decodes the EU's signal \(x_E\) by treating the CU's signal as interference. The SINR for this step is
\begin{align}\label{aq5}
\gamma_{C,E}=\frac{a_E P\left(d_{SR_1}d_{R_1C}\right)^{-\alpha}\left|\gamma^{(C)}_O\right|^2}{(a_C+\varrho_C^2) P \left(d_{SR_1}d_{R_1C}\right)^{-\alpha}\left|\gamma^{(C)}_O\right|^2+\sigma^2}.
\end{align} 

If \(x_E\) is successfully decoded and removed from the received signal, the SNR to decode \(x_C\) is
\begin{align}\label{aq6}
\gamma_{C,C}=\frac{a_C P\left(d_{SR_1}d_{R_1C}\right)^{-\alpha}\left|\gamma^{(C)}_O\right|^2}{\varrho_C^2 P \left(d_{SR_1}d_{R_1C}\right)^{-\alpha}\left|\gamma^{(C)}_O\right|^2+\sigma^2}.
\end{align} 

However, if the decoding of \(x_E\) fails due to channel impairments or noise, the CU must decode \(x_C\) with interference. In this case, the SINR becomes
\begin{align}\label{aq7}
\gamma_{C,CE}=\frac{a_C P\left(d_{SR_1}d_{R_1C}\right)^{-\alpha}\left|\gamma^{(C)}_O\right|^2}{(a_E+\varrho_C^2) P \left(d_{SR_1}d_{R_1C}\right)^{-\alpha}\left|\gamma^{(C)}_O\right|^2+\sigma^2}.
\end{align}

Similarly, the EU receives its signal exclusively through RIS-2. The received signal at the $l$-th FAS port for the EU is
\begin{align}\label{aq8}
y^{(E)}_{l}=&\left( \sum_{m=1}^M h^{(E)}_m v^{(E)}_{m,l} e^{-j\pi\theta^{(E)}_m}\right)\left(d_{SR_2}d_{R_2E}\right)^{-\frac{\alpha}{2}}x+z^{(E)}_{l}+z'_{l},
\end{align}
where the channel coefficients \(h^{(E)}_m \sim \mathcal{CN}(0, \epsilon^{(E)}_1)\) and \(v^{(E)}_{m,l} \sim \mathcal{CN}(0, \epsilon^{(E)}_2)\), HI noise \(z^{(E)}_{l}\), and AWGN \(z'_{l}\) are defined analogously to those for the CU. After optimizing the RIS phases using the same co-phasing strategy, the channel amplitude at the \(l\)-th port is \(\gamma^{(E)}_l= \sum_{m=1}^M|h^{(E)}_m||v^{(E)}_{m,l}|\), and the FA selects the optimal port with gain \(\gamma^{(E)}_O=\mathop{\text{max}}\limits_{l=1,\cdots, L} \gamma^{(E)}_l\).

The EU treats the CU's signal as interference when decoding its own signal. The SINR at the EU to decode its own signal \(x_E\) is
\begin{align}\label{aq11}
\gamma_{E,E}=\frac{a_E P\left(d_{SR_2}d_{R_2E}\right)^{-\alpha}\left|\gamma^{(E)}_O\right|^2}{(a_C+\varrho_E^2) P \left(d_{SR_2}d_{R_2E}\right)^{-\alpha}\left|\gamma^{(E)}_O\right|^2+\sigma^2}.
\end{align} 

As an eavesdropper, the EU may also attempt to decode the CU's confidential message \(x_C\) to compromise the system security. {Following the \emph{worst-case eavesdropper} convention widely adopted in the physical layer security literature \cite{YangW19,LaiX21}, we assume the EU can perform \emph{ideal} SIC to perfectly remove its own signal $x_E$ before decoding $x_C$, even though hardware impairments are present at the EU's receiver. This assumption represents a deliberate security-conservative design choice: it yields the strongest possible eavesdropping capability, providing a \emph{lower bound} on the achievable security performance. In practice, if the EU's SIC is impaired, the residual interference from undecoded $x_E$ would \emph{degrade} the EU's ability to decode $x_C$, resulting in \emph{better} security for the CU than our analysis predicts. The HI at the EU's receiver is still captured through the $\varrho_E^2$ term in the noise floor, which limits the EU's maximum achievable SNR to $a_C/\varrho_E^2$ regardless of transmit power.} The corresponding SNR would be
\begin{align}\label{aq12}
\gamma_{E,C}=\frac{a_C P\left(d_{SR_2}d_{R_2E}\right)^{-\alpha}\left|\gamma^{(E)}_O\right|^2}{\varrho_E^2 P \left(d_{SR_2}d_{R_2E}\right)^{-\alpha}\left|\gamma^{(E)}_O\right|^2+\sigma^2}.
\end{align}

For the sake of analysis, the 3D Clarke's isotropic scattering model {\cite{Clarke68,WongK22}} is adopted for the FAS within the CU and EU. This model accurately captures the spatial correlation characteristics of FAs in realistic propagation environments{, where uniform scattering from all directions in three-dimensional space yields the sinc-function correlation coefficient}. Specifically, the correlation coefficient between ports $\psi$ and $\omega$ is modeled by 
\begin{align}\label{aq14}
	g^{(X)}(\psi,\omega)=\text{sinc}\left(\frac{2\pi (\psi-\omega) W}{L-1}\right),
\end{align}
where $X\in\{C,E\}$ represents the CU and EU, respectively, and $\mathrm{sinc}(x)=\frac{\sin(x)}{x}$ is the sinc function. Note that $\mathrm{sinc}(x)$ is an even function, which indicates that the correlation coefficient matrix is a symmetric Toeplitz matrix as   follows
\begin{equation}\label{aq15}
	\mathbf{\Sigma}^{(X)}\in\mathbb{R}^{L\times L}=
	\begin{bmatrix}
		g_{1,1} & g_{1,2} & \dots & g_{1,L}\\
		g_{1,2} & g_{1,1} & \dots & g_{1,L-1}\\
		\vdots &  \ddots & \vdots \\
		g_{1,L} & g_{1,L-1} & \dots & g_{1,1}
	\end{bmatrix}.
\end{equation} 

{\begin{remark}[Imperfect SIC Extension]
The proposed framework can accommodate imperfect SIC by introducing a residual interference factor $\zeta \in [0,1]$, where $\zeta = 0$ corresponds to perfect SIC and $\zeta = 1$ to no cancellation. Under imperfect SIC, the SNR for decoding $x_C$ after SIC in \eqref{aq6} is modified to
\begin{align}
\gamma_{C,C}^{\mathrm{iSIC}} = \frac{a_C P (d_{SR_1} d_{R_1C})^{-\alpha} |\gamma_O^{(C)}|^2}{(\zeta a_E + \varrho_C^2) P (d_{SR_1} d_{R_1C})^{-\alpha} |\gamma_O^{(C)}|^2 + \sigma^2}.
\end{align}
This modification only changes the parameter $c$ in the critical point expression from $\varrho_C^2$ to $\zeta a_E + \varrho_C^2$, and all subsequent derivations follow identically. Imperfect SIC effectively increases the ``equivalent HI level,'' further degrading CU performance and tightening the SINR ceiling from $a_C/\varrho_C^2$ to $a_C/(\zeta a_E + \varrho_C^2)$. From a security perspective, imperfect SIC at the CU degrades legitimate link reliability without affecting the eavesdropper's capability, worsening the secure BLER.
\end{remark}}

{\begin{remark}[FAS Port Selection vs.\ Conventional Antenna Selection]
While both FAS port selection and traditional antenna selection aim to maximize the received SNR, they differ fundamentally in three aspects: (i) \textbf{Spatial correlation:} In conventional antenna selection with well-separated antennas ($d \geq \lambda/2$), the channels across antennas are typically assumed independent. In FAS, the $L$ ports are densely distributed within a compact space of size $W\lambda$, resulting in \emph{strong spatial correlation} characterized by the sinc-function correlation matrix in \eqref{aq15}. This correlation fundamentally changes the diversity analysis and requires specialized tools such as the block-correlation approximation (BCA) model. (ii) \textbf{Physical implementation:} FAS uses a single reconfigurable antenna element that can switch its effective position among $L$ ports, requiring only \emph{one RF chain}. Conventional antenna selection requires $L$ physically separate antennas with a switching network, consuming more space and hardware resources. (iii) \textbf{Analytical challenge:} The correlated port gains in FAS make the analysis of $\gamma_O^{(X)} = \max_{l=1,\ldots,L} \gamma_l^{(X)}$ significantly more challenging than the independent case, as the joint distribution of correlated order statistics must be characterized.
\end{remark}}

{Practical FAS port selection can be implemented through pilot-based channel estimation, where the FA sequentially estimates the channel quality at each of the $L$ ports using known pilot symbols. Given that FAS port switching can be performed at electronic speeds ($\sim \mu$s for pixel-based or varactor-based implementations \cite{NewW24,TWu3}), all $L$ ports can be probed within a single coherence interval. For URLLC short-packet communications with $m = 200$ symbols, the packet duration at typical sub-6 GHz rates ($\sim 100~\mu$s) is significantly shorter than the channel coherence time ($\sim 1$--$10$ ms at pedestrian speeds), ensuring that the quasi-static block fading assumption holds and the selected port remains optimal throughout the entire packet transmission.}

\subsubsection{BLER and Secure BLER Formulation}

In this paper, we focus on the average secure BLER of the CU. At the CU, from \cite{Poly10,LaiX21}, the BLER to decode \(x_E\) is
\begin{align}\label{aq16}
\Psi \triangleq Q\left ({\frac {\log _{2}(1+\gamma _{C,E})-N_{e}/m}{\sqrt {V(\gamma _{C,E})/m}}}\right ),
\end{align}
where $Q(y)=\frac{1}{\sqrt{2\pi}}\int_y^{\infty} \exp(-\frac{t^2}{2})dt$ denotes the Gaussian Q function and $V(y)=(\log_2 e)^2\cdot(1-(1+y)^{-2})$ denotes the channel dispersion that characterizes the finite blocklength effects.

Provided that the decoding of \(x_E\) is successful, when $\gamma_{C,C}>\gamma_{E,C}$, the secure BLER to decode \(x_C\) is approximated by \cite{YangW19}
\begin{align} \label{aq17}
\Phi \triangleq Q\left ({\frac {\sqrt {m}}{V^{\frac {1}{2}}(\gamma _{C,C})}\left ({\log _{2}\frac {1+\gamma _{C,C}}{1+\gamma _{E,C}}-\frac {\mu V^{\frac {1}{2}}(\gamma _{E,C})}{\sqrt {m}}-\frac {N_{c}}{m}}\right )}\right ),
\end{align}
where $\mu=Q^{-1}(\delta)$ and $\delta$ refers to the information leakage level that quantifies the security performance. When $\gamma_{C,C}\leq\gamma_{E,C}$, the secure BLER to decode \(x_C\) is one \cite{YangW19}, indicating complete security failure. In this article, we assume $\delta$ is fixed in the range of (0,0.5) as a stringent secrecy constraint. 

Note that in \eqref{aq17}, when $\gamma_{C,C}>\gamma_{E,C}$, the EU is still able to decode some portion of messages intended for the CU with errors, since $\log_2\left(1+\gamma_{E,C}\right)$ is positive. However, as stated in \cite{YangW19}, if the BS adds redundancy to the transmitted signals and uses randomness to derive secrecy, reliable secure transmission can be guaranteed. The amount of redundancy depends on $\log_2\left(1+\gamma_{E,C}\right)$, the secure BLER, and the information leakage level, whose relationship is specified by \eqref{aq17} \cite{LaiX21}. 

Because of error propagation in the SIC process, the secure BLER to decode \(x_C\) at the CU is $\epsilon=1-\bar{\epsilon}$ where
\begin{align}\label{aq18}
\bar{\epsilon}=(1-\Psi)(1-\Phi)+\Psi (1-\Xi).
\end{align}

On the right-hand side of \eqref{aq18}, the first term $(1-\Psi)(1-\Phi)$ represents the successful probability of decoding \(x_C\) when both \(x_E\) and \(x_C\) are successfully decoded, while the second term $\Psi (1-\Xi)$ represents the successful probability when \(x_E\) decoding fails but \(x_C\) can still be decoded with interference. According to \cite{LaiX21}, $\Psi$ and $\Phi$ are not independent because of the common random variable $\left|\gamma^{(C)}_O\right|^2$. Similarly, $\Psi$ and $\Xi$ are also not independent due to the shared channel realization.

\section{Performance Analysis}

\subsection{Average Secure BLER Formulation}

The average secure BLER is obtained by taking the expectation of $\epsilon$ with respect to the optimal channel gains $\left|\gamma^{(C)}_O\right|^2$ and $\left|\gamma^{(E)}_O\right|^2$.  Therefore, we have 
\begin{align} \label{aq20}
\mathbb {E}[\epsilon ]\!=\!1\!- \int _{0}^{\infty }\!\!\!\int _{0}^{\infty }\bar {\epsilon }|_{\!\scriptscriptstyle \begin{array}{l} \left|\gamma^{(E)}_O\right|^2=y, \\ \left|\gamma^{(C)}_O\right|^2=z\end{array}\!\!\!}f_{\left|\gamma^{(C)}_O\right|^2}(y)f_{\left|\gamma^{(E)}_O\right|^2} (z)dydz.
\end{align}

Substituting \eqref{aq18} into \eqref{aq20} and utilizing the linearity of expectation, we obtain
\begin{equation} \label{aq21}
\mathbb {E}[\epsilon ]=1-\mathbb {E}[\bar{\epsilon}]=1-\mathbb {E}[(1-\Psi)(1-\Phi)]-\mathbb {E}[\Psi (1-\Xi)].
\end{equation}

Expanding \eqref{aq21}, we get
\begin{equation} \label{aq21a}
\mathbb {E}[\epsilon ]=\mathbb {E}[\Phi ]-\mathbb {E}[\Psi \Phi ]+\mathbb {E}[\Psi \Xi ].
\end{equation}

\subsection{Linear Approximation Method}

The computation of \eqref{aq21a} is challenging because $\Phi$, $\Psi$, and $\Xi$ involve Gaussian Q functions, making the double integrals intractable. To overcome this difficulty, we employ linear approximations for these BLER expressions, which have been shown to provide accurate results in the finite blocklength regime \cite{MakkiB14,LaiX21}.
  The three BLER expressions $\Psi$, $\Phi$, and $\Xi$ all follow the same approximation structure:
\begin{align} \label{aq22}
\mathcal{F}(\gamma) \approx \hat{\mathcal{F}}(\gamma) \triangleq \! \begin{cases} 1; &\quad \gamma \leq \alpha_{low} \\[2pt] \displaystyle \frac {1}{2}-\kappa(\gamma-\alpha_{th}); &\quad \alpha_{low} < \gamma < \alpha_{up} \\[2mm] 0; &\quad \gamma \geq \alpha_{up}, 
\end{cases}
\end{align}
where $\alpha_{low,up}=\alpha_{th}\mp\frac{1}{2\kappa}$ define the linear transition region. For the BLER $\Psi$ of EU signal decoding at the CU, we set $\gamma=\gamma_{C,E}$, $\alpha_{th}=\beta=2^{N_e/m}-1$, and $\kappa=k=(2\pi(2^{2N_e/m-1})/m)^{-1/2}$. For the secure BLER expressions $\Phi$ and $\Xi$ of CU signal decoding, we use $\gamma=\gamma_{C,C}$ and $\gamma=\gamma_{C,CE}$ respectively, with $\alpha_{th}=\tilde{\beta}$ and $\kappa=\tilde{k}=\sqrt{m}(2\pi \tilde {\beta}(\tilde {\beta}+2))^{-1/2}$, where the security threshold parameter is given by
\begin{align} \label{aq24}
\tilde {\beta }=(1+\gamma_{E,C})\exp \left ({\frac {\mu V^{\frac {1}{2}}(\gamma_{E,C})+N_{c}}{\sqrt {m}}\ln 2}\right )-1,
\end{align}
with $\mu=Q^{-1}(\delta)$ and $\delta$ being the target information leakage level.

\subsection{Computation of Individual Expectations}

\subsubsection{Computation of $\mathbb{E}[\hat{\Phi}]$} 
After applying the linear approximation and performing the inner integral, we obtain
\begin{align} \label{aq26}
\mathbb {E}[\hat {\Phi }]\approx \int_0^\infty f_{\gamma_{E,C}}(y) \tilde {k}\int_{\tilde {v}}^{\tilde {u}} F_{\gamma_{C,C}}(x)dx.
\end{align}
Then, we employ the Gauss-Chebyshev quadrature method, which provides excellent accuracy for smooth integrands. Combined with the first-order Riemann integral approximation $\int_a^b f(\tau)d\tau=\frac{b-a}{2}f(\frac{a+b}{2})$, we obtain
\begin{align} \label{aq27}
\mathbb {E}[\hat {\Phi }]\approx &\frac {G}{2}\sum ^{U}_{p=1}\frac {\pi }{U}\sqrt {1-t_{p}^{2}}f_{\gamma_{E,C}} (\frac {G}{2}(t_{p}+1) )F_{\gamma_{C,C}} (\tilde {\beta} ),
\end{align}
where $G$ is a sufficiently large positive number that ensures the integration domain covers the significant portion of the PDF, $U$ is a complexity-vs-accuracy tradeoff parameter, and $t_{p}=\cos \left ({\frac {(2p-1)\pi }{2U}}\right )$  represents the Chebyshev nodes.

\subsubsection{Computation of $\mathbb {E}[\Psi \Phi]$}

The computation of $\mathbb{E}[\Psi \Phi]$ is more challenging due to the correlation between $\Psi$ and $\Phi$ through the common random variable $|\gamma^{(C)}_O|^2$. We employ a conditional expectation approach to handle this dependency.
To derive $\mathbb {E}[\Psi \Phi]$, we first derive $\mathbb {E}[\Psi \Phi|\gamma_{E,C}]$, which is expressed as follows
\begin{align} \label{aq29}
\mathbb {E}[\Psi \Phi |\gamma _{E,C}]\approx \int _{0}^{\infty }\hat {\Psi }\hat {\Phi } \Big |_{\!\scriptscriptstyle \begin{array}{c} \gamma_{E,C}, \\ |\gamma_{O}^{(C)}|^{2}=y \end{array}}\!\!\! f_{|\gamma_{O}^{C}|^{2}}(y)dy.
\end{align}
In \eqref{aq29}, given $\gamma_{E,C}$, $\hat {\Psi }$ and $\hat {\Phi }$ are piecewise functions whose intervals depend on $\gamma_{C,E}$ and $\gamma_{C,C}$. Since both $\gamma_{C,E}$ and $\gamma_{C,C}$ are functions of $|\gamma_{O}^{C}|^{2}$ from \eqref{aq5} and \eqref{aq6}, we need to determine the critical values of $|\gamma_{O}^{C}|^{2}$ at which the piecewise functions switch between different linear segments.

By inverting the SINR expressions at the threshold points, we obtain four critical values that follow the general form:
\begin{align} \label{aq30}
|\gamma_{O}^{(C)}|^{2}=\phi_{t}\triangleq \begin{cases} \displaystyle \frac {\sigma^2 t}{(b-c t) P\left(d_{SR_1}d_{R_1C}\right)^{-\alpha}}; &\quad t < \displaystyle \frac {b}{c} \\[3mm] \displaystyle \infty ; &\quad t\geq \displaystyle \frac {b}{c}, \end{cases}
\end{align}
where $(t,b,c)$ takes the values $(\tilde{v},a_C,\varrho_C^2)$, $(\tilde{u},a_C,\varrho_C^2)$, $(v,a_E,a_C+\varrho_C^2)$, $(u,a_E,a_C+\varrho_C^2)$, $(\tilde{v},a_C,a_E+\varrho_C^2)$, and $(\tilde{u},a_C,a_E+\varrho_C^2)$ for $\phi_{\tilde{v}}$, $\phi_{\tilde{u}}$, $\phi_{v}$, $\phi_{u}$, $\bar{\phi}_{\tilde{v}}$, and $\bar{\phi}_{\tilde{u}}$, respectively.
Thus, $\hat {\Psi }$ and $\hat {\Phi }$ are related to the values of $\phi _{\tilde {v}}$, $\phi _{\tilde {u}}$, $\phi _{v}$, and $\phi _{u}$. By comparing the relative magnitudes of these four critical values, the integration domain is partitioned into distinct cases, each corresponding to different ordering relationships among $\phi _{\tilde {v}}$, $\phi _{\tilde {u}}$, $\phi _{v}$, and $\phi _{u}$. Due to the complexity of the piecewise linear approximations, six different cases need to be considered based on these orderings (detailed derivations are provided in Appendix~\ref{Ap2}). For each case, the conditional expectation $\mathbb{E}[\Psi \Phi |\gamma_{E,C}]$ is computed using Gauss-Chebyshev quadrature integration over the appropriate sub-intervals.

Finally, the unconditional expectation is obtained as 
\begin{align}\label{aq46}
 \mathbb {E}[\Psi \Phi ]\approx \int _{0}^{\infty }\mathbb {E}[\Psi \Phi |\gamma _{E,C}]\Big |_{|\gamma_{O}^{(E)}|^{2}=y} f_{|\gamma_{O}^{(E)}|^{2}}(y)dy.
 \end{align}
Since $\mathbb {E}[\Psi \Phi |\gamma _{E,C}]$ is expressed as a piecewise function with 6 cases, the integral interval $[0,\infty)$ in \eqref{aq46} must be carefully partitioned. From \eqref{aq24}, $\tilde{\beta}$ depends on $\gamma _{E,C}$ and thus on $|\gamma_{O}^{(E)}|^{2}$. The determination of appropriate integration intervals requires solving for the critical points where the ordering relationships among $\phi _{\tilde {v}}$, $\phi _{\tilde {u}}$, $\phi _{v}$, and $\phi _{u}$ change.

To establish the monotonic relationship essential for this analysis, we present the following key result:

\begin{proposition}
The security threshold parameter $\tilde {\beta }$ is a monotonically increasing function with respect to $|\gamma_{O}^{(E)}|^{2}$.
\end{proposition}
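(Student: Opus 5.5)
The plan is to write $\tilde{\beta}$ as a composition of two maps, $|\gamma_O^{(E)}|^2\mapsto\gamma_{E,C}$ and $\gamma_{E,C}\mapsto\tilde{\beta}$, and show that each one increases.

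\emph{First link.} From \eqref{aq12}, put $A=P(d_{SR_2}d_{R_2E})^{-\alpha}>0$ and $x=|\gamma_O^{(E)}|^2$. Then
\begin{align*}
\gamma_{E,C}=\frac{a_C A x}{\varrho_E^2 A x+\sigma^2},\qquad
\frac{d\gamma_{E,C}}{dx}=\frac{a_C A\sigma^2}{(\varrho_E^2Ax+\sigma^2)^2}>0 .
\end{align*}
So $\gamma_{E,C}$ is strictly increasing in $x$ and takes values in $[0,a_C/\varrho_E^2)$.

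\emph{Second link.} Put $y=\gamma_{E,C}$ and $c=\ln 2/\sqrt{m}>0$. By \eqref{aq24},
\begin{align*}
\tilde\beta(y)=(1+y)\exp\!\big(c(\mu V^{1/2}(y)+N_c)\big)-1 .
\end{align*}
Since $\delta\in(0,0.5)$, we have $\mu=Q^{-1}(\delta)>0$. The dispersion satisfies $V(y)=(\log_2 e)^2(1-(1+y)^{-2})$, which is nonnegative and strictly increasing for $y\ge 0$. Because $\sqrt{\cdot}$ is increasing, $V^{1/2}(y)$ is increasing too. The exponent $c(\mu V^{1/2}(y)+N_c)$ is therefore nondecreasing, so the exponential factor is positive and nondecreasing. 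The factor $1+y$ is positive and strictly increasing. A product of a positive strictly increasing function and a positive nondecreasing function is strictly increasing, so $\tilde\beta$ is strictly increasing in $y$.

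If an explicit derivative is wanted, use
\begin{align*}
\frac{d}{dy}V^{1/2}(y)=\frac{(\log_2 e)(1+y)^{-3}}{\sqrt{1-(1+y)^{-2}}}\ge 0 .
\end{align*}
This gives
\begin{align*}
\tilde\beta'(y)=e^{c(\mu V^{1/2}+N_c)}\Big[1+(1+y)c\mu\,\tfrac{d}{dy}V^{1/2}(y)\Big]>0 .
\end{align*}

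\emph{Conclusion.} Composing two strictly increasing maps gives the claim. The only delicate point is the endpoint $y=0$, where the derivative of $V^{1/2}$ blows up. I would avoid differentiating there and rely on the direct product-of-monotone-functions argument, which stays valid on the closed interval. Using that argument also makes the $\mu>0$ condition, guaranteed by $\delta<0.5$, the only sign assumption needed.
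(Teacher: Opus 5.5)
Your proof is correct and follows the same route as the paper's proof in Appendix~\ref{Ap1}: it composes the monotonicity of $\tilde\beta$ in $\gamma_{E,C}$ (using $\mu=Q^{-1}(\delta)\ge 0$) with the monotonicity of $\gamma_{E,C}$ in $|\gamma_O^{(E)}|^2$. Your explicit derivatives are the accurate ones, since the paper's $\partial\gamma_{E,C}/\partial|\gamma_O^{(E)}|^2$ drops the positive factor $a_C P(d_{SR_2}d_{R_2E})^{-\alpha}\sigma^2$, and your derivative-free product argument also avoids the blow-up of $\frac{d}{dy}V^{1/2}(y)$ at $y=0$, which the paper does not address.
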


\begin{proof}
See Appendix~\ref{Ap1}.
\end{proof}

With Proposition 1, the critical points of $|\gamma_{O}^{(E)}|^{2}$ can be uniquely determined by solving the equations where different critical values become equal. The detailed mathematical derivations for finding these critical points and determining the integration intervals for each case are provided in Appendix~\ref{Ap2}. The solutions are denoted as $\tau_1, \tau_2, \tau_3, \tau_4$ with $\tau_1 \leq \tau_2$ and $\tau_3 \leq \tau_4$.
Based on the relative ordering of these critical points, six distinct cases arise, each defining specific integration intervals for \eqref{aq46}. The detailed analysis of these cases and their corresponding integration domains are provided in Appendix~\ref{Ap2}.

Substituting the integral intervals and $\mathbb {E}[\Psi \Phi]$ for 6 cases into \eqref{aq46}, the integral is complicated since the integrand includes the exponential integral functions. Therefore, we propose to utilize the Gauss-Chebyshev integral to deal with the integral \eqref{aq46}.

In the $i$th case, $i\in\{1,2,\cdots,6\}$, the integral interval can be divided into multiple intervals, where the $ {A}_{i,j}$ and ${B}_{i,j}$ denote the corresponding lower and upper limits. Denote ${\Upsilon }_{i,j}$ as the function $\mathbb {E}[\Psi \Phi|\gamma_{E,C}]$  with respect to$|\gamma_{O}^{(E)}|^{2}$ in the $j$th interval of $i$th case. Thus, $\mathbb {E}[\Psi \Phi]$ can be computed as
\begin{align}
&\mathbb {E}[\Psi \Phi ]\approx \frac {\pi }{\tilde {U}}\sum _{i}\sum _{j}\frac {B_{i,j}-A_{i,j}}{2} \nonumber\\
&\cdot \sum _{p=1}^{\tilde {U}}\Upsilon _{i,j}(\tilde {y}_{p})f_{|\gamma_{O}^{(E)}|^{2}} \left (\tilde {y}_{p}\right )\sqrt {1-\tilde {t}_{p}^{2}},
\end{align}
where $\tilde {U}$ is a complexity-vs-accuracy tradeoff parameter, and
\begin{align} 
\tilde {t}_{p}=&\cos \left ({\frac {(2p-1)\pi }{2\tilde {U}}}\right ),\\ 
\tilde {y}_{p}=&\frac {B_{i,j}-A_{i,j}}{2}\tilde {t}_{p}+\frac {B_{i,j}+A_{i,j}}{2}.
\end{align}

The accuracy of the Gauss-Chebyshev integral can be arbitrary small with large value of $\tilde {U}$ .

\subsubsection{Derivation of $\mathbb {E}[\Psi \Xi]$}

Similarly, to derive $\mathbb {E}[\Psi \Xi]$ , we first derive $\mathbb {E}[\Psi \Xi|\gamma_{E,C}]$, which is expressed as follows
\begin{align} \label{aq70}
\mathbb {E}[\Psi \Xi|\gamma _{E,C}]\approx \int _{0}^{\infty }\hat {\Psi }\hat {\Xi} \Big |_{\!\scriptscriptstyle \begin{array}{c} \gamma_{E,C}, \\ |\gamma_{O}^{(C)}|^{2}=y \end{array}}\!\!\! f_{|\gamma_{O}^{C}|^{2}}(y)dy.
\end{align}
In \eqref{aq70}, given $\gamma_{E,C}$, $\hat {\Psi }$ and $\hat {\Xi}$ are piecewise functions whose intervals depend on $\gamma_{C,CE}$ and $\gamma_{C,C}$. Since both $\gamma_{C,CE}$ and $\gamma_{C,C}$ are functions of $|\gamma_{O}^{C}|^{2}$ from \eqref{aq5} and \eqref{aq7}, we need to determine the critical values of $|\gamma_{O}^{C}|^{2}$ at which the piecewise functions switch between different linear segments.

Following the general form in \eqref{aq30}, when $\gamma_{C,CE}=\tilde {v}$ and $\gamma_{C,CE}=\tilde {u}$, the corresponding critical values are $\bar{\phi}_{\tilde{v}}$ and $\bar{\phi}_{\tilde{u}}$, where $(t,b,c)$ takes the values $(\tilde{v},a_C,a_E+\varrho_C^2)$ and $(\tilde{u},a_C,a_E+\varrho_C^2)$, respectively.

Thus, $\hat {\Psi }$ and $\hat {\Xi }$ are related to the values of $\bar {\phi }_{\tilde {v}}$, $\bar {\phi }_{\tilde {u}}$, $\phi _{v}$, and $\phi _{u}$. By comparing the relative magnitudes of these four critical values, the integration domain is partitioned into six distinct cases (Cases 7-12), each corresponding to different ordering relationships among $\bar {\phi }_{\tilde {v}}$, $\bar {\phi }_{\tilde {u}}$, $\phi _{v}$, and $\phi _{u}$. The detailed derivations for these cases are provided in Appendix~\ref{Ap3}. For each case, the conditional expectation $\mathbb{E}[\Psi \Xi |\gamma_{E,C}]$ is computed using Gauss-Chebyshev quadrature integration over the appropriate sub-intervals.

Similar to the analysis in Section~III-B2, the critical points of $|\gamma_{O}^{(E)}|^{2}$ for Cases 7-12 are determined by solving the equations where different critical values become equal. The solutions are denoted as $\tau_5, \tau_6, \tau_7, \tau_8$ with appropriate ordering relationships (detailed analysis in Appendix~\ref{Ap3}).

Finally, the unconditional expectation $\mathbb{E}[\Psi \Xi]$ is obtained by integrating over the appropriate intervals determined by these critical points. The detailed derivations for Cases 7-12 and their corresponding integration intervals are provided in Appendix~\ref{Ap3}.  Using Proposition 1, $\mathbb {E}[\Psi \Xi]$ can be computed as 
\begin{align}
&\mathbb {E}[\Psi \Xi ]\approx \frac {\pi }{\bar {U}}\sum _{i}\sum _{j}\frac {\bar {B}_{i,j}-\bar {A}_{i,j}}{2} \nonumber\\
&\cdot \sum _{p=1}^{\tilde {U}}\bar {\Upsilon }_{i,j}(\bar{y}_{p})f_{|\gamma_{O}^{(E)}|^{2}} \left (\bar{y}_{p}\right )\sqrt {1-\bar {t}_{p}^{2}},
\end{align}
where $\bar {U}$ is a complexity-vs-accuracy trade-off parameter, $\bar {A}_{i,j}$ and $\bar {B}_{i,j}$ denote the corresponding lower and upper limits of $j$th interval of $i$th case, $i\in\{7,8,\cdots,12\}$, respectively, $\bar {\Upsilon }_{i,j}$ denotes the function $\mathbb {E}[\Psi \Xi|\gamma_{E,C} ]$ with respect to $|\gamma_{O}^{(E)}|^{2}$ in the$j$th interval of $i$th case, and
\begin{align} 
\bar {t}_{p}=&\cos \left ({\frac {(2p-1)\pi }{2\bar {U}}}\right ),\\ 
\bar{y}_{p}=&\frac {B_{i,j}-A_{i,j}}{2}\bar  {t}_{p}+\frac {B_{i,j}+A_{i,j}}{2}.
\end{align}

\subsection{FAS Statistical Characterization}

To complete the analysis in \eqref{aq21a}, we need the PDFs and CDFs of $|\gamma_{O}^{(C)}|^{2}$, $|\gamma_{O}^{(E)}|^{2}$, $\gamma_{E,C}$, and $\gamma_{C,C}$. However, obtaining exact closed-form expressions is intractable due to the max operation in FAS port selection and the cascaded RIS channels.

We employ two approximation techniques: the block-correlation approximation (BCA) model and the central limit theorem (CLT). The BCA model simplifies the FAS correlation structure, while the CLT enables analytical characterization when the number of RIS elements $M$ is sufficiently large.

\subsubsection{Block-Correlation Approximation Model}

The block-correlation approximation (BCA) model \cite{Espinosa24} approximates the Toeplitz correlation matrix $\mathbf{\Sigma}^{(X)}$ in \eqref{aq15} with a block-diagonal structure:
\begin{equation}
  \hat{ \mathbf{\Sigma}}^{(X)}\in\mathbb{R}^{L\times L}=
	\begin{bmatrix}
	\mathbf{A}_1 &\mathbf{0} &\mathbf{0} & \dots &\mathbf{0}\\
	\mathbf{0} &\mathbf{A}_2 &\mathbf{0}& \dots & \mathbf{0}\\
	 \vdots &  \ddots & \vdots \\
	 \mathbf{0}& \mathbf{0} & \mathbf{0}& \dots & \mathbf{A}_B
	\end{bmatrix},
\end{equation}
where each block $\mathbf{A}_b^{(X)}$ has constant correlation coefficient:
\begin{equation}\label{q17}
 \mathbf{A}_b^{(X)}\in\mathbb{R}^{L_b^{(X)}\times L_b^{(X)}}=
	\begin{bmatrix}
	1 &\mu^{(X)} &\mu^{(X)} & \dots &\mu^{(X)}\\
	\mu^{(X)} &1&\mu^{(X)}& \dots &\mu^{(X)}\\
	 \vdots &  \ddots & \vdots \\
	 \mu^{(X)}& \mu^{(X)} & \mu^{(X)}& \dots & 1
	\end{bmatrix},
\end{equation}
where $\mu^{(X)}$ is close to unity, $L_b^{(X)}$ is the block size with $\sum_{b=1}^B L_b^{(X)}=L$, and $B$ is the number of blocks determined by significant eigenvalues. The block sizes are optimized to minimize $\text{dist}( \mathbf{\Sigma}^{(X)}, \hat{ \mathbf{\Sigma}}^{(X)})$ \cite{Espinosa24,LaiX242}.  

\begin{figure*}[!t]
	\normalsize
	\begin{align}\label{eq:gamma_EC_pdf}
		f_{\gamma_{E,C}}(y)=\begin{cases}
			\displaystyle\frac{a_C \sigma^2}{(\varrho_E^2 y-a_C )^2 P \left(d_{SR_2}d_{R_2E}\right)^{-\alpha}} f_{|\gamma_{O}^{(E)}|^{2}}\left(\frac{\sigma^2 y}{(\varrho_E^2 y-a_C)P \left(d_{SR_2}d_{R_2E}\right)^{-\alpha} }\right); & y \leq \frac {a_C}{\varrho_E^2} \\[3mm]
			0; & y > \frac {a_C}{\varrho_E^2}.
		\end{cases}
	\end{align}
	\hrulefill
	\vspace*{4pt}
\end{figure*}
\subsubsection{CLT Approximation}

With large $M$, each channel gain $\gamma_l^{(X)}$ is the sum of $M$ independent random variables, making the central limit theorem (CLT) applicable \cite{LaiX242,LaiX26,TWu6}. {The CLT approximation accuracy improves monotonically with $M$. To quantify this, we performed Kolmogorov-Smirnov (K-S) goodness-of-fit tests comparing the empirical distribution of $\gamma_l^{(X)}$ (from $2\times 10^5$ Monte Carlo samples) with the Gaussian approximation, obtaining K-S statistics of $0.054$, $0.039$, $0.029$, $0.023$, and $0.019$ for $M = 4, 8, 16, 20, 32$, respectively (see Fig.~\ref{fig_clt}). For our default $M = 20$, the K-S statistic of $0.023$ indicates an excellent fit. This is consistent with the general rule-of-thumb for CLT validity ($M \geq 15$--$20$ for sums of non-Gaussian i.i.d.\ variables) and with prior FAS-RIS analysis \cite{LaiX242,LaiX26,GhadiF24}.} The channel gain vector $\mathbf{\Gamma}^{(X)}=[\gamma_1^{(X)},\cdots,\gamma_L^{(X)}]$ is approximated by a Gaussian vector with correlation matrix:
\begin{align}
	\bar{ \mathbf{\Sigma}}^{(X)}\in\mathbb{R}^{L\times L}
	=\begin{bmatrix}
		\mathbf{D}_1^{(X)} &\mathbf{C} &\mathbf{C} & \dots &\mathbf{C}\\
		\mathbf{C} &\mathbf{D}_2^{(X)} &\mathbf{C}& \dots & \mathbf{C}\\
		\vdots &  \ddots & \vdots \\
		\mathbf{C}&\mathbf{C} & \mathbf{C}& \dots & \mathbf{D}_B^{(X)}
	\end{bmatrix},
\end{align}
where $\mathbf{C}$ has inter-block correlation coefficient $\rho_0=\frac{\pi(4-\pi)}{16-\pi^2}$. The channel gain at the $k$-th port is:
\begin{align}
\bar{\gamma}^{(X)}_k=\sqrt{1-\rho^{(X)}_0}d^{(X)}_k+\sqrt{\rho^{(X)}_0}d^{(X)}_0+E_{\gamma_{(X)}},
\end{align}
where $d^{(X)}_k\sim \mathcal{N}(0, V_{\gamma_{(X)}})$, $E_{\gamma_{(X)}}= \frac{M\pi\sqrt{\epsilon_1^{(X)}\epsilon_2^{(X)}}}{4}$, and $V_{\gamma_{(X)}}= M\epsilon_1^{(X)}\epsilon_2^{(X)}\left(1-\frac{\pi^2}{16}\right)$. 
The cumulative distribution function (CDF) of the optimal channel gain $\gamma_{O}^{(X)}=\max_{l=1,\ldots,L}\gamma_l^{(X)}$ is \cite{LaiX242}:
\begin{align}\label{q21}
&F_{\bar{\gamma}_{(X)}^*}(y_{(X)})\nonumber\\
&\approx  \frac{H\pi}{U_l}\sum_{l=1}^{U_l} \frac{1}{2^B}\left[1+\text{erf}\left(\frac{y_{(X)}-E_{\gamma_{(X)}}-\rho^{(X)}_0\tau}{\sqrt{2V_{\gamma_{(X)}}(1-\rho^{(X)}_0)}}\right)\right]^B\nonumber\\
&\quad\times\sqrt{\frac{1-q_l^2}{2\pi V_{\gamma_{(X)}}}}e^{-\frac{\tau^2}{2V_{\gamma_{(X)}}}},
\end{align}
where $H$ is the integration bound, $U_l$ controls accuracy, $\tau=\frac{H(q_l+1)}{2}$, and $q_l=\cos\left(\frac{(2l-1)\pi}{2U_l}\right)$. The probability density function (PDF) is:
\begin{align}\label{q22}
&f_{\bar{\gamma}_{(X)}^*}(y_{(X)})\nonumber\\
&\approx  \frac{H}{U_lV_{\gamma_{(X)}}}\sum_{l=1}^{U_l} \frac{B}{2^B}\left[1+\text{erf}\left(\frac{y_{(X)}-E_{\gamma_{(X)}}-\rho^{(X)}_0\tau}{\sqrt{2V_{\gamma_{(X)}}(1-\rho^{(X)}_0)}}\right)\right]^{B-1}\nonumber\\
&\quad\times\sqrt{\frac{1-q_l^2}{1-\rho^{(X)}_0}}e^{-\frac{(y_{(X)}-E_{\gamma_{(X)}}-\rho_0\tau)^2}{2V_{\gamma_{(X)}}(1-\rho^{(X)}_0)}-\frac{\tau^2}{2V_{\gamma_{(X)}}}}.
\end{align}

The squared channel gains have distributions:
\begin{align}\label{q23}
F_{|\gamma_{O}^{(C)}|^{2}}(t)&=\Pr(|\gamma_{O}^{(C)}|^{2}<t)=\Pr(\gamma_{O}^{(C)}<\sqrt{t})=F_{\bar{\gamma}_{(C)}^*}(\sqrt{t}),\\
f_{|\gamma_{O}^{(C)}|^{2}}(y)&=\frac{1}{2\sqrt{y}}f_{\bar{\gamma}_{(C)}^*}(\sqrt{y}),\\
f_{|\gamma_{O}^{(E)}|^{2}}(z)&=\frac{1}{2\sqrt{z}}f_{\bar{\gamma}_{(E)}^*}(\sqrt{z}).
\end{align}

\begin{figure}[t]
	\centering
	\includegraphics[width=3in]{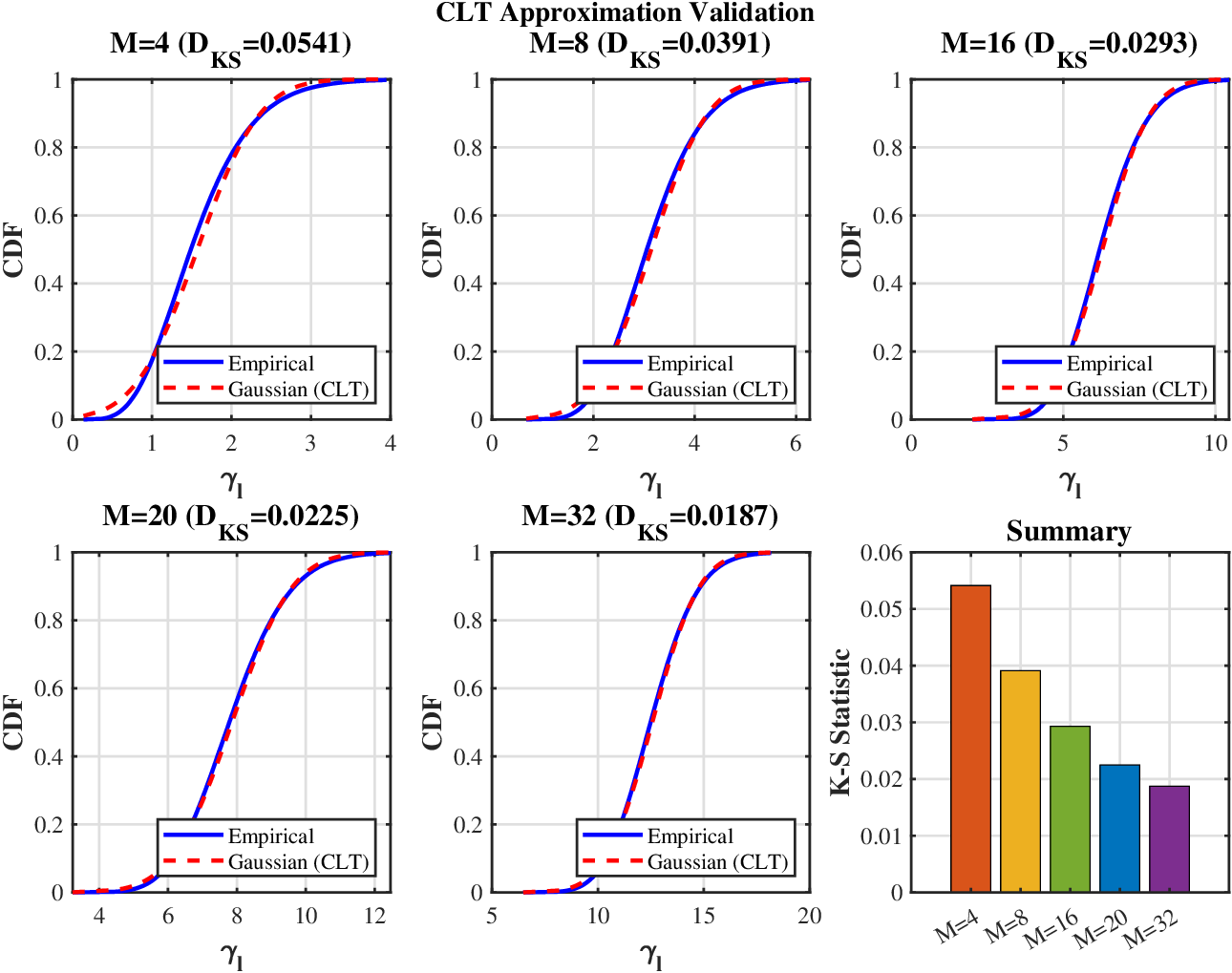}
	\caption{{CLT approximation validation: empirical CDF vs.\ Gaussian CDF of $\gamma_l^{(X)}$ for $M=4, 8, 16, 20, 32$ RIS elements, and K-S statistic summary.}}
	\label{fig_clt}
\end{figure}

The SINR distributions are:
\begin{align}
F_{\gamma_{C,C}}(t)= 
\begin{cases} \displaystyle F_{\bar{\gamma}_{(C)}^*}\left(\sqrt{\frac{\sigma^2\left(d_{SR_1}d_{R_1C}\right)^{\alpha}t}{(a_C-\varrho_C^2 t) P}}\right); &\quad t \leq \displaystyle \frac {a_C}{\varrho_C^2} \\[3mm] \displaystyle 1; &\quad t > \displaystyle \frac {a_C}{\varrho_C^2}. \end{cases}
\end{align}
and the probability density function (PDF) of $\gamma_{E,C}$ is given in (\ref{eq:gamma_EC_pdf}) at the top of this page.

  \subsubsection{Final Average Secure BLER Expression}
 
 Substituting the derived distributions into the expectation expressions from previous subsections yields the final analytical expression:
 \begin{align}
 	\mathbb {E}[\epsilon] = \mathbb {E}[\hat{\Phi}] - \mathbb {E}[\hat{\Psi} \hat{\Phi}] + \mathbb {E}[\hat{\Psi} \hat{\Xi}].
 \end{align} 

\subsection{Computational Complexity Analysis}

While piecewise linearization and Gauss-Chebyshev quadrature make the analysis tractable, the computational complexity remains a practical consideration.   The overall complexity is dominated by the Gauss-Chebyshev quadrature operations, which scale as $\mathcal{O}(U^2)$ for each expectation term, where $U$ is the accuracy parameter. Despite this quadratic scaling, our approach offers significant computational advantages over Monte Carlo simulations, which typically require $\mathcal{O}(10^6)$ samples for comparable accuracy. The proposed analytical framework reduces computation time by approximately two orders of magnitude while maintaining high precision, making it suitable for {offline} system optimization and design parameter exploration.

{More specifically, for each of the three expectation terms $\mathbb{E}[\hat{\Phi}]$, $\mathbb{E}[\hat{\Psi}\hat{\Phi}]$, and $\mathbb{E}[\hat{\Psi}\hat{\Xi}]$, the complexity scales as $\mathcal{O}(U \cdot \tilde{U})$, where $U$ and $\tilde{U}$ are the quadrature accuracy parameters (typically $U = \tilde{U} = 20$--$50$ suffice for high accuracy). The total complexity is therefore $\mathcal{O}(3 \times U \times \tilde{U} \times 6) = \mathcal{O}(18 U \tilde{U})$ for the 6-case piecewise analysis. With $U = \tilde{U} = 30$, this amounts to approximately $1.6 \times 10^4$ function evaluations, which completes in under 0.1 seconds on a standard desktop computer---compared to $\mathcal{O}(10^6)$ Monte Carlo channel realizations that each require FAS port selection, RIS channel generation, and SIC decoding evaluation. To further reduce computational overhead, several approximations can be employed: (i) reducing the quadrature order ($U = 10$--$15$) with modest accuracy loss ($< 3\%$ relative error); (ii) using asymptotic expressions in the high-SNR regime where the SINR ceiling effect simplifies the piecewise analysis; (iii) employing a dominant-case approximation that evaluates only the most probable piecewise case (typically 1--2 out of 6 cases contribute significantly), reducing complexity by $3$--$6\times$.}

{For adaptive real-time system design, a two-tier implementation strategy is proposed: (i) \emph{Offline design}: The analytical expressions are evaluated over a pre-defined grid of system parameters (transmit power $P$, power allocation $a_C$, number of ports $L$, HI levels $\varrho^2$) to construct lookup tables (LUTs) that map system configurations to expected secure BLER performance. (ii) \emph{Online adaptation}: During real-time operation, the system retrieves optimal configurations from the pre-computed LUTs based on current channel statistics, enabling sub-millisecond parameter adaptation. This approach is consistent with practical 5G/6G system design where resource allocation tables are pre-computed and indexed during real-time scheduling.}

\subsection{Impact of HIs on System Performance}

HIs introduce signal-dependent distortion noise that fundamentally alters system performance. The key impacts are: (i) signal-to-interference-plus-noise ratio (SINR) degradation through additional noise terms, (ii) modification of critical points in piecewise linear approximations, and (iii) introduction of performance ceilings.

The SINR degradation ratio compared to the ideal case ($\varrho_C^2 = 0$) is:
\begin{align}
\frac{\gamma_{C,C}^{\text{HI}}}{\gamma_{C,C}^{\text{ideal}}} &= \frac{\sigma^2}{\varrho_C^2 P \left(d_{SR_1}d_{R_1C}\right)^{-\alpha}\left|\gamma^{(C)}_O\right|^2+\sigma^2}, \label{eq:degradation_ratio}
\end{align}
which approaches zero in the high-power regime, indicating a performance ceiling at $\lim_{P \to \infty} \gamma_{C,C}^{\text{HI}} = a_C/\varrho_C^2$.

HI also modifies the critical points in \eqref{aq30} by changing the parameter $c$ from $\{0, a_C, a_E\}$ (ideal case) to $\{\varrho_C^2, a_C + \varrho_C^2, a_E + \varrho_C^2\}$ (HI case), directly affecting the case classifications and integration intervals.

\begin{theorem}[HIs Impact on Average Secure BLER]
\label{thm:hi_impact}
For a FAS-RIS NOMA system with HIs levels $\varrho_C^2$ and $\varrho_E^2$, the average secure BLER satisfies:
\begin{align}
\mathbb{E}[\epsilon^{\text{HI}}] \geq \mathbb{E}[\epsilon^{\text{ideal}}], \quad \text{and} \quad \lim_{P \to \infty} \mathbb{E}[\epsilon^{\text{HI}}] = \mathbb{E}[\epsilon^{\text{ceiling}}] > 0, \label{eq:bler_bounds}
\end{align}
where equality in the first inequality holds if and only if $\varrho_C^2 = \varrho_E^2 = 0$.
\end{theorem}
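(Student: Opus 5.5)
The plan is to argue pointwise, for every fixed realization of $(|\gamma_O^{(C)}|^2,|\gamma_O^{(E)}|^2)$, and then pass to expectations. The distributions in \eqref{q23} and the BCA/CLT model do not involve $\varrho_C^2,\varrho_E^2$, so the hardware impairments enter only through the SINRs \eqref{aq5}--\eqref{aq7} and \eqref{aq12}. First I rewrite the conditional secure BLER from \eqref{aq18} as
\begin{align}
\epsilon=1-\bar{\epsilon}=\Phi(1-\Psi)+\Psi\,\Xi .
\end{align}
This is a convex combination of $\Phi$ and $\Xi$ with weight $\Psi\in(0,1)$.

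\emph{Monotonicity in the SINRs.} The argument of the Q-function in \eqref{aq16} is increasing in $\gamma_{C,E}$ whenever it matters, since $\log_2(1+\gamma)$ grows faster than $\sqrt{V(\gamma)}$ in the relevant regime. The same holds for $\hat{\Psi}$ in \eqref{aq22}. Hence $\Psi$ is nonincreasing in $\gamma_{C,E}$. Likewise $\Phi$ is nonincreasing in $\gamma_{C,C}$ and nondecreasing in $\gamma_{E,C}$; in the linearized form this follows because $\tilde{\beta}$ increases with $\gamma_{E,C}$, which is exactly the computation behind Proposition~1. Finally, $\Xi$ is nonincreasing in $\gamma_{C,CE}$. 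Since $\Xi\ge\Phi$ (the SINR $\gamma_{C,CE}$ is below $\gamma_{C,C}$), $\epsilon$ is also nonincreasing in $\gamma_{C,E}$.

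Each of $\gamma_{C,E},\gamma_{C,C},\gamma_{C,CE}$ is strictly decreasing in $\varrho_C^2$ for $|\gamma_O^{(C)}|^2>0$. So $\epsilon$ is nondecreasing in $\varrho_C^2$. It is strictly increasing on the event where $\epsilon$ is not saturated at $0$ or $1$. That event has positive probability under the CLT/BCA densities, which gives strictness of $\mathbb{E}[\epsilon^{\text{HI}}]\ge\mathbb{E}[\epsilon^{\text{ideal}}]$ whenever $\varrho_C^2>0$.

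\emph{Main obstacle: the eavesdropper's impairment.} $\varrho_E^2$ enters only through $\gamma_{E,C}$, which decreases as $\varrho_E^2$ grows. By the previous step this \emph{lowers} $\Phi$, so the direction is opposite to the claim. The first thing I would try is to state the comparison with $\varrho_E^2$ held fixed, i.e., ``ideal'' meaning an ideal CU front end, where the argument above is complete. If the statement must cover both levels simultaneously, I would look for a sufficient condition under which the CU-side loss dominates, for example $\varrho_C^2\ge\varrho_E^2$ together with a comparison of the two link budgets. In that case I would compare $\partial\epsilon/\partial\varrho_C^2$ against $\partial\epsilon/\partial\varrho_E^2$ via the chain rule through $\tilde{\beta}$ in \eqref{aq24}. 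I expect the unconditional ``iff'' to require such an extra hypothesis, and I would flag it explicitly.

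\emph{High-power limit.} As $P\to\infty$, for every realization with nonzero channel gains the SINRs converge to deterministic ceilings:
\begin{align}
\gamma_{C,E}\to\frac{a_E}{a_C+\varrho_C^2},\quad \gamma_{C,C}\to\frac{a_C}{\varrho_C^2},\quad \gamma_{C,CE}\to\frac{a_C}{a_E+\varrho_C^2},\quad \gamma_{E,C}\to\frac{a_C}{\varrho_E^2},
\end{align}
consistent with \eqref{eq:degradation_ratio}. $\Psi$, $\Phi$ and $\Xi$ are continuous in these SINRs and bounded by $1$. Dominated convergence therefore gives $\mathbb{E}[\epsilon^{\text{HI}}]\to\epsilon^{\text{ceiling}}$, a constant obtained by plugging the ceilings into $\Phi(1-\Psi)+\Psi\Xi$.

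For positivity, with the exact Q-function expressions every Q-argument is finite at the ceilings, so $\Phi,\Xi>0$ and $\Psi\in(0,1)$. The convex combination is then strictly positive. If $\gamma_{C,C}\le\gamma_{E,C}$ at the ceilings, i.e., $\varrho_C^2\ge\varrho_E^2$, then $\Phi=1$ and positivity is immediate. With the linearized forms \eqref{aq22}, positivity requires that a ceiling lie below the corresponding $\alpha_{up}$, and I would state this as the regime condition.
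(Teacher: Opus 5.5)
Your diagnosis is correct, and it means the statement cannot be proved as written. Your route is genuinely different from the paper's, and sounder.

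The paper's entire proof is this: HIs ``reduce all SINR values'' and $Q$ is decreasing, hence $\Psi^{\text{HI}}\ge\Psi^{\text{ideal}}$, $\Phi^{\text{HI}}\ge\Phi^{\text{ideal}}$, $\Xi^{\text{HI}}\ge\Xi^{\text{ideal}}$, and the result ``follows from \eqref{aq21a} and the SINR ceiling effect.'' This argument has exactly the two defects you work around.

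\textbf{(i) The eavesdropper's impairment acts in the opposite direction.} $\gamma_{E,C}$ enters $\Phi$ in \eqref{aq17}, and enters $\Xi$ through $\tilde\beta$ in \eqref{aq24}, with the reverse sign. So $\Phi^{\text{HI}}\ge\Phi^{\text{ideal}}$ fails in the $\varrho_E^2$ direction. Concretely, take $\varrho_C^2=0<\varrho_E^2$:
\begin{itemize}
\item $\Psi$, $\gamma_{C,C}$ and $\gamma_{C,CE}$ are unchanged from the ideal system.
\item The exact $\Phi$ drops strictly on the positive-probability event where $\gamma_{C,C}$ exceeds the impaired $\gamma_{E,C}$.
\item Since $1-\Psi>0$, this gives $\mathbb{E}[\epsilon^{\text{HI}}]<\mathbb{E}[\epsilon^{\text{ideal}}]$.
\end{itemize}
This contradicts both the inequality and the ``if and only if.'' The paper's own remark on Fig.~\ref{fig8}(c), that severe EU HIs ``provide natural security advantages,'' says the same thing.

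\textbf{(ii) Termwise monotonicity does not transfer.} \eqref{aq21a} contains $-\mathbb{E}[\Psi\Phi]$, so monotonicity of $\Psi,\Phi,\Xi$ separately does not carry over to $\mathbb{E}[\epsilon]$, even on the CU side. Your rewriting $\epsilon=\Phi(1-\Psi)+\Psi\Xi$, together with $\Xi\ge\Phi$ (same $\tilde\beta$, and $\gamma_{C,CE}\le\gamma_{C,C}$), is precisely the missing step.

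Your pointwise argument therefore proves the correct replacement. $\mathbb{E}[\epsilon]$ is nondecreasing in $\varrho_C^2$, strictly by your non-saturation argument. Using the same nonnegative weights $1-\Psi$ and $\Psi$, it is nonincreasing in $\varrho_E^2$. A comparison against the fully ideal system needs extra hypotheses, as you say.

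\textbf{Three smaller corrections.}

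\emph{First, continuity at the ceiling.} $\Phi$ is not continuous in the SINRs: it jumps to $1$ on $\{\gamma_{C,C}\le\gamma_{E,C}\}$. When $\varrho_C^2=\varrho_E^2$ (the paper's default setting), $\gamma_{C,C}$ and $\gamma_{E,C}$ both tend to $a_C/\varrho^2$, i.e., onto the jump. So ``plug in the ceilings'' is not justified there. Your claim that $\varrho_C^2\ge\varrho_E^2$ forces $\Phi=1$ also fails at equality. For every $P$, $\gamma_{C,C}>\gamma_{E,C}$ holds exactly when $(d_{SR_1}d_{R_1C})^{-\alpha}|\gamma_O^{(C)}|^2>(d_{SR_2}d_{R_2E})^{-\alpha}|\gamma_O^{(E)}|^2$. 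This is a $P$-independent event of positive probability, on which $\Phi$ converges to a value below $1$. Dominated convergence still applies, because the integrand converges realization by realization. The limit is then an expectation over this event rather than a constant, and positivity survives.

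\emph{Second, the floor is not HI-specific.} Your positivity argument never uses $\varrho^2>0$. With exact $Q$-functions the ideal system also has a positive floor: $\gamma_{C,E}\to a_E/a_C$, and $\log_2\frac{1+\gamma_{C,C}}{1+\gamma_{E,C}}$ converges to a finite random value. So the ``$>0$'' part is true but is not a signature of HIs.

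\emph{Third, the linearized $\hat\Phi$.} Monotonicity of $\hat\Phi$ in $\tilde\beta$ is not automatic, because the slope $\tilde k$ also depends on $\tilde\beta$. In the transition region you need $2\tilde k^2\ge|\partial\tilde k/\partial\tilde\beta|$. This holds for the paper's parameters, but it must be checked rather than inferred from Proposition~1 alone.
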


\begin{proof}
Since  HIs reduce all SINR values and the Q-function is monotonically decreasing, we have $\Psi^{\text{HI}} \geq \Psi^{\text{ideal}}$, $\Phi^{\text{HI}} \geq \Phi^{\text{ideal}}$, and $\Xi^{\text{HI}} \geq \Xi^{\text{ideal}}$. The result follows from \eqref{aq21a} and the SINR ceiling effect.
\end{proof}

\section{Numerical Results} 

To validate our theoretical framework and demonstrate the security performance when FAs meet intelligent surfaces under HIs, we present comprehensive numerical results. The simulations validate the accuracy of our analytically derived average secure BLER expressions and provide insights into the practical implications of HIs on FAS-RIS NOMA systems. For our simulations, we adopt the following parameters: $\epsilon_1^{(C)}={\epsilon_2^{(C)}}=\epsilon_1^{(E)}=1/2$, ${\epsilon_2^{(E)}}=1/20$, $W = 5$, $\alpha=2$, $a_C=0.2$, $a_E=0.8$, $d_{SR_1}=20$ m, $d_{R_1C}=10$ m, $d_{SR_2}=40$ m, $d_{R_2 E}=20$ m, $\varrho^2_C=\varrho^2_E= 0.05$ \cite{ZhouG21}, and path loss factor is 2. The noise variance is $\sigma^2=0.001$ mW, and the BS transmits $N_e=150$ bits to the EU in short packets consisting of $m=200$ symbols.

\begin{figure*}[t]
	\centering
	\begin{minipage}[b]{0.32\linewidth}
		\centering
		\includegraphics[width=2.3in]{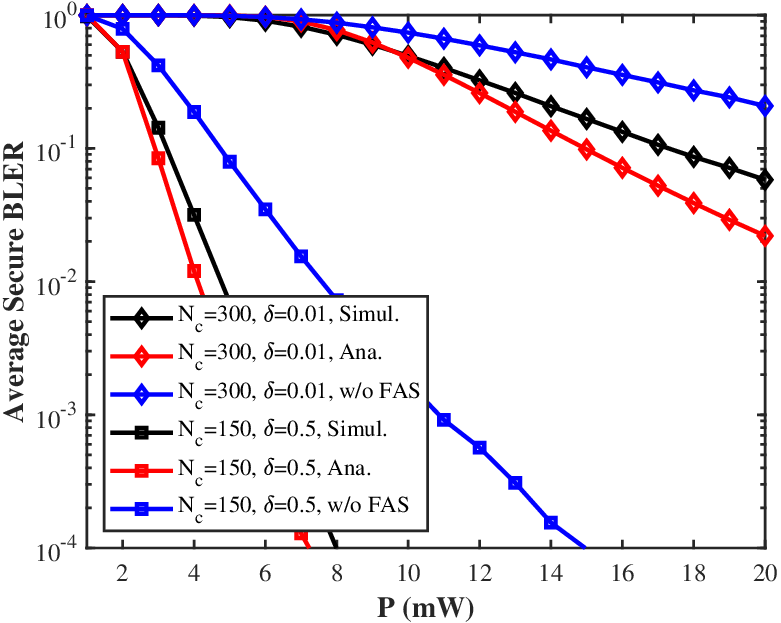}
		\caption{Average secure BLER versus $P$.}
		\label{fig2}
	\end{minipage}
	\hfill
	\begin{minipage}[b]{0.32\linewidth}
		\centering
		\includegraphics[width=2.3in]{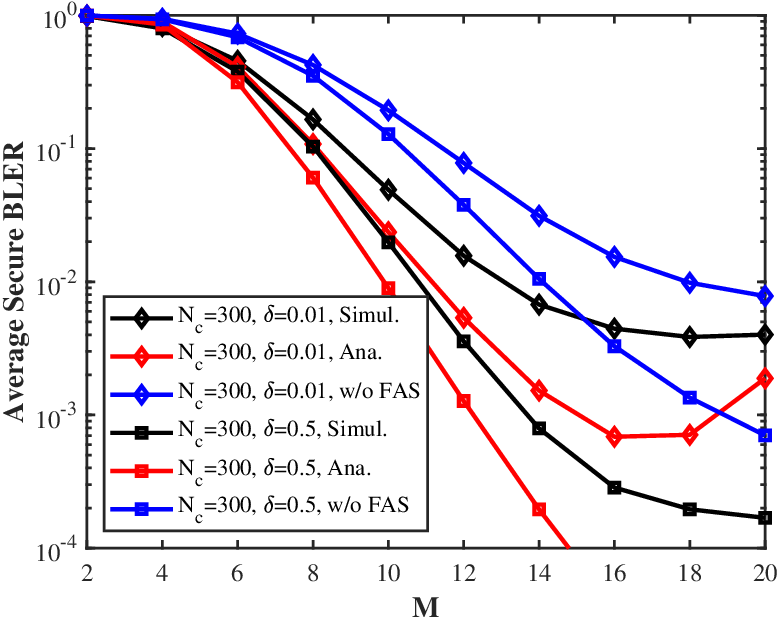}
		\caption{Average secure BLER versus $M$.}
		\label{fig3}
	\end{minipage}
	\hfill
	\begin{minipage}[b]{0.32\linewidth}
		\centering
		\includegraphics[width=2.3in]{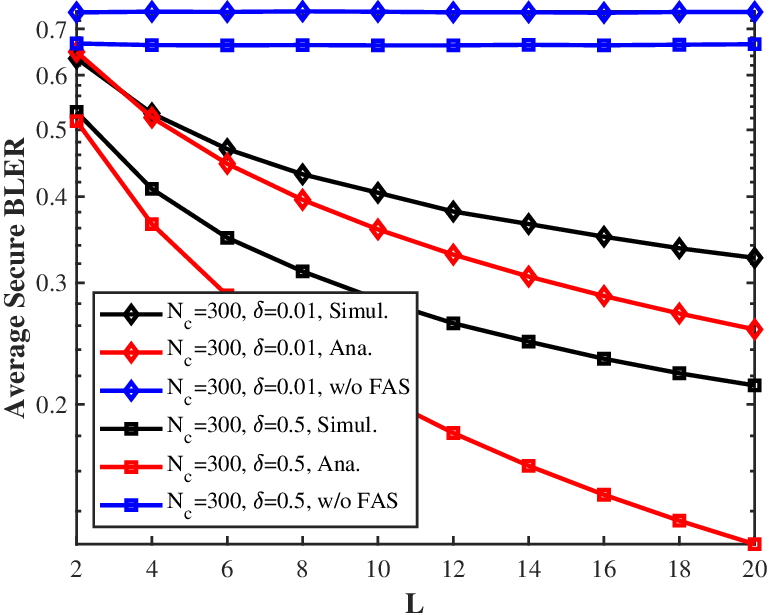}
		\caption{Average secure BLER versus $L$.}
		\label{fig4}
	\end{minipage} \vspace{-4mm}
\end{figure*}

Fig.~\ref{fig2} investigates the average secure BLER performance with respect to transmit power $P$ for multiple parameter configurations under HIs, where $a_C=0.2$, $a_E=0.8$, $\varrho^2_C=\varrho^2_E= 0.05$, $L=5$, and $M=20$. Three distinct scenarios are examined: (i) $N_c=300$ bits with stringent security requirement $\delta=0.01$, (ii) $N_e=150$ bits with relaxed security constraint $\delta=0.5$, and (iii) $N_c=200$ bits with moderate security level $\delta=0.1$.   The results demonstrate that FAS-enabled systems consistently outperform conventional systems across all power levels and parameter settings, with the performance gap being most pronounced at higher transmit powers. 

Fig.~\ref{fig3} investigates the average secure BLER performance with respect to the number of RIS elements $M$ for multiple parameter configurations under HIs, where $a_C=0.2$, $a_E=0.8$, $\varrho^2_C=\varrho^2_E= 0.05$, $L=5$, and $P=100$ mW. Three distinct scenarios are examined: (i) $N_c=200$ bits with stringent security requirement $\delta=0.01$, (ii) $N_c=200$ bits with relaxed security constraint $\delta=0.5$, and (iii) $N_c=300$ bits with relaxed security level $\delta=0.5$.   

Fig.~\ref{fig4} investigates the average secure BLER performance with respect to the number of FAS ports $L$ for multiple parameter configurations under HIs, where $a_C=0.2$, $a_E=0.8$, $P=10$ mW, and $M=20$. Four distinct scenarios are examined: (i) $N_c=300$ bits with stringent security requirement $\delta=0.01$, (ii) $N_c=300$ bits with relaxed security constraint $\delta=0.5$, (iii) $N_c=200$ bits with stringent security $\delta=0.01$, and (iv) $N_c=250$ bits with moderate security level $\delta=0.1$. The results reveal that increasing FAS ports provides diminishing returns, with most performance gains achieved within the first 6-8 ports. 

{From an energy efficiency perspective, both FAS and RIS offer inherent advantages over conventional alternatives: FAS achieves spatial diversity through electronic port switching within a single compact antenna, avoiding the power consumption of multiple independent RF chains required by traditional antenna arrays---for $L$ ports, FAS requires only one RF chain versus $L$ RF chains in conventional selection combining, reducing RF power consumption by a factor of approximately $L$. RIS elements are nearly passive (requiring only low-power PIN diode or varactor biasing, typically $\sim$5--15 mW per element \cite{DiRenzo20}), making them significantly more energy-efficient than active relays or additional base station antennas. The total additional power consumption for a system with $M = 20$ RIS elements is approximately 100--300 mW, which is negligible compared to the BS transmit power. The diminishing returns observed beyond $L = 6$--$8$ FAS ports and the ``cliff effect'' beyond $\varrho^2 = 0.04$ suggest that moderate hardware quality ($\varrho^2 \leq 0.05$) with $L = 6$ ports achieves near-optimal secure BLER performance, offering a practical balance between deployment cost and security performance. Investing in higher RF component quality (lower $\varrho^2$) yields more significant returns than simply adding more FAS ports or RIS elements, providing a clear hardware procurement guideline for system designers.}

\begin{figure*}[t]
	\centering
	\begin{minipage}[b]{0.48\linewidth}
		\centering
		\includegraphics[width=3.2in]{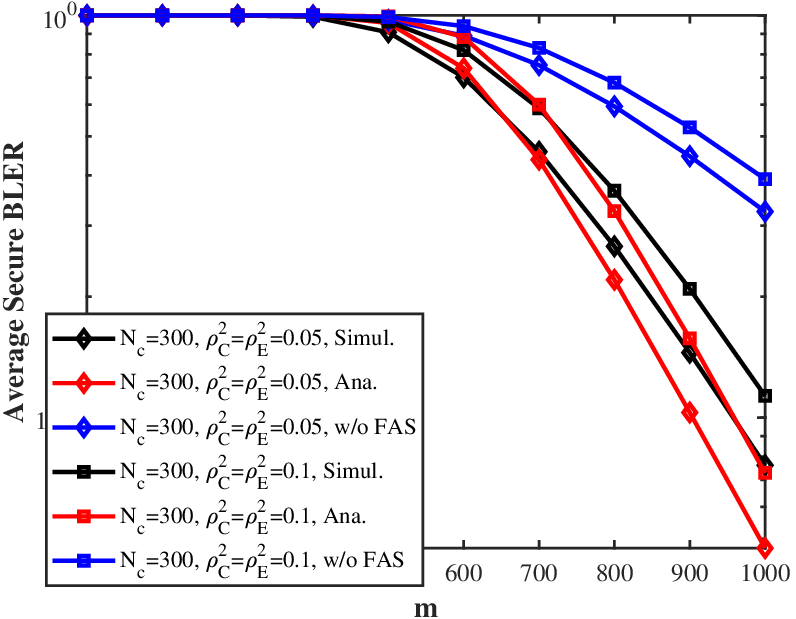}
		\caption{Average secure BLER versus $m$.}
		\label{fig5}
	\end{minipage}
	\hfill
	\begin{minipage}[b]{0.48\linewidth}
		\centering
		\includegraphics[width=3.2in]{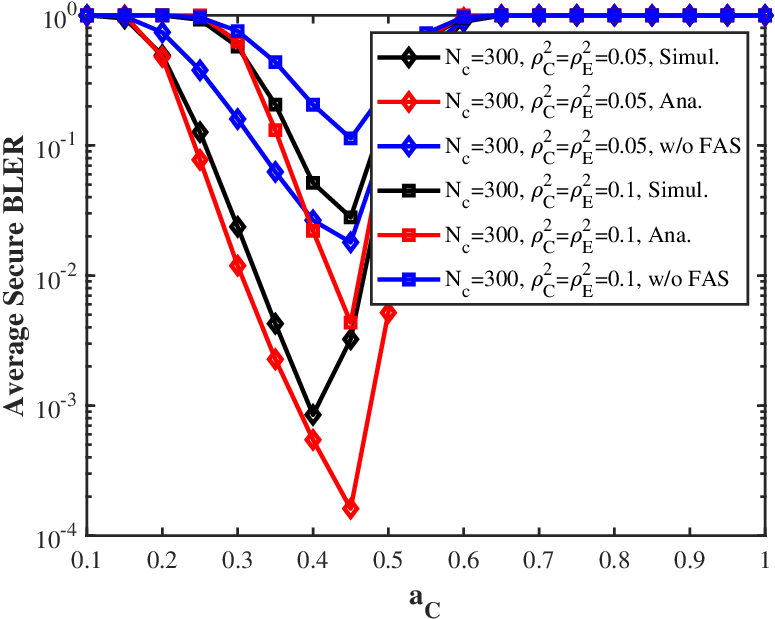}
		\caption{Average secure BLER versus $a_C$.}
		\label{fig6}
	\end{minipage} \vspace{-4mm}
\end{figure*}

Fig.~\ref{fig5} investigates the average secure BLER performance with respect to blocklength $m$ for multiple parameter configurations under HIs, where $a_C=0.2$, $a_E=0.8$, $L=5$, $P=1$ mW, and $M=20$. Four distinct scenarios are examined: (i) $N_c=300$ bits with HIs level $\varrho^2=0.05$, (ii) $N_c=300$ bits with higher impairment $\varrho^2=0.1$, (iii) $N_c=250$ bits with $\varrho^2=0.05$, and (iv) $N_c=200$ bits with $\varrho^2=0.08$. The results demonstrate a clear exponential decay in BLER as blocklength increases, confirming the fundamental finite blocklength theory predictions. Configurations with higher HIs levels ($\varrho^2=0.1$) exhibit consistently worse performance, while larger message sizes ($N_c=300$) require longer blocklengths to achieve the same BLER targets.

Fig.~\ref{fig6} investigates the average secure BLER performance with respect to power allocation coefficient $a_C$ for multiple parameter configurations under HIs, where $a_E=1-a_C$, $P=10$ mW, $L=5$, and $M=20$. Four distinct scenarios are examined: (i) $N_c=300$ bits with HIs level $\varrho^2=0.05$, (ii) $N_c=300$ bits with higher impairment $\varrho^2=0.1$, (iii) $N_c=250$ bits with $\varrho^2=0.05$, and (iv) $N_c=200$ bits with $\varrho^2=0.08$. The results reveal a characteristic U-shaped curve with optimal power allocation occurring around $a_C=0.4-0.5$, representing the fundamental trade-off between signal power and interference in NOMA systems. Configurations with higher HIs consistently exhibit elevated BLER floors, while the optimal operating point shifts slightly toward higher $a_C$ values under severe impairments. The analytical approximations show excellent agreement with simulation results across most operating regions, with minor deviations primarily attributed to the linear approximations $\Phi\approx\hat{\Phi}$, $\Psi\approx\hat{\Psi}$ and the block-correlation approximation (BCA).

  \begin{figure*}[t] 
  	\centering 
  	\includegraphics[width=7.4in]{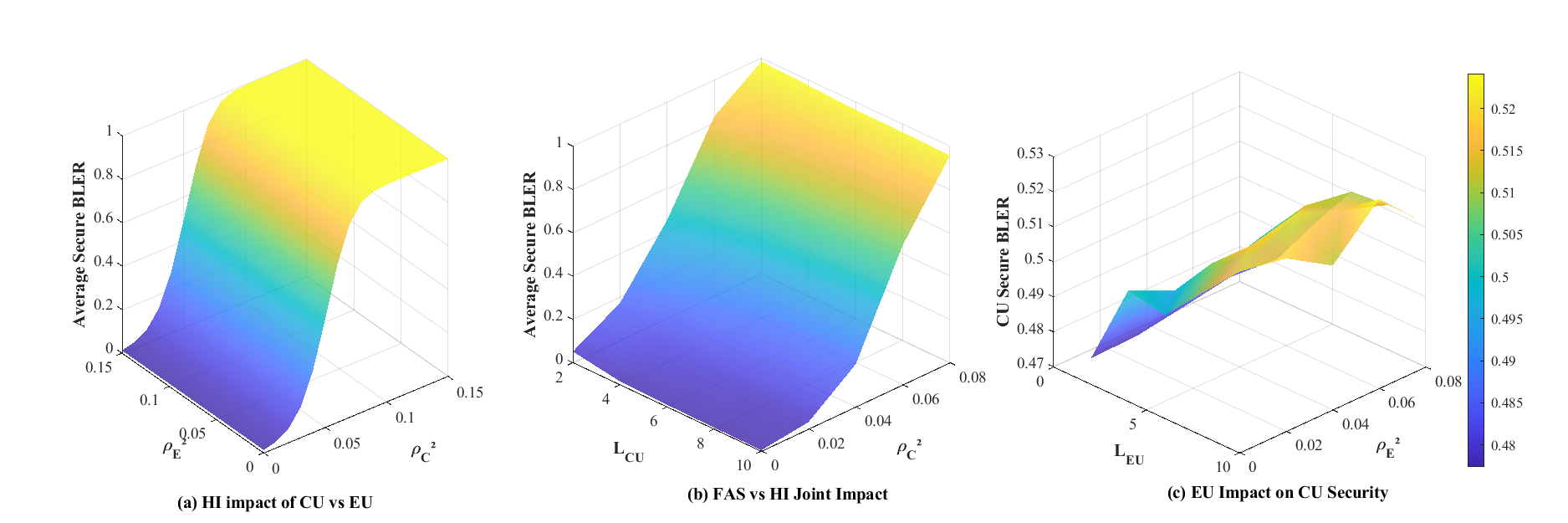}
  	\caption{3D surface analysis of HIs and FAS impact on secure BLER performance.}
  	\label{fig8}
  \end{figure*}

Fig.~\ref{fig8} presents three complementary 3D surface analyses examining HIs and FAS configurations impact on secure BLER performance, where $a_C=0.2$, $a_E=0.8$, $N_c=300$ bits, $N_e=150$ bits, $m=200$, $M=20$, $P=10$ mW, and $\delta=0.01$. Fig.~\ref{fig8}(a) illustrates general 3D BLER performance with FAS ports $L$ (2-10) and HIs levels $\varrho^2$ (0-0.08), showing curved surfaces with exponentially increasing BLER floors at high impairment levels and diminishing FAS returns beyond $L=6$. \emph{HIs create pronounced "cliff" effects beyond $\varrho^2=0.04$, demonstrating critical sensitivity to RF front-end quality. High-quality RF components are essential for practical secure communications.} Fig.~\ref{fig8}(b) depicts CU-specific analysis with CU FAS ports $L_{CU}$ and impairment level $\varrho_C^2$ under fixed EU parameters ($L_{EU}=5$, $\varrho_E^2=0.05$). The surface shows strong coupling between spatial diversity and hardware quality, with optimal regions at high FAS counts and low impairments. \emph{FAS benefits are significantly amplified when HIs are well-controlled ($\varrho_C^2 < 0.03$). Even modest impairments can negate extensive spatial diversity advantages.} Fig.~\ref{fig8}(c) reveals adversarial EU impact on CU security, analyzing EU FAS ports $L_{EU}$ and impairment level $\varrho_E^2$ effects with fixed CU parameters ($L_{CU}=5$, $\varrho_C^2=0.05$). The surface exhibits expected adversarial relationships where improved EU capabilities degrade CU security, with steeper gradients along the HIs axis. \emph{EU hardware quality has more pronounced impact than FAS port count on CU security. Severe EU HIs ($\varrho_E^2 > 0.06$) provide natural security advantages to legitimate links.}

\begin{figure}[t]
	\centering
	\includegraphics[width=3.2in]{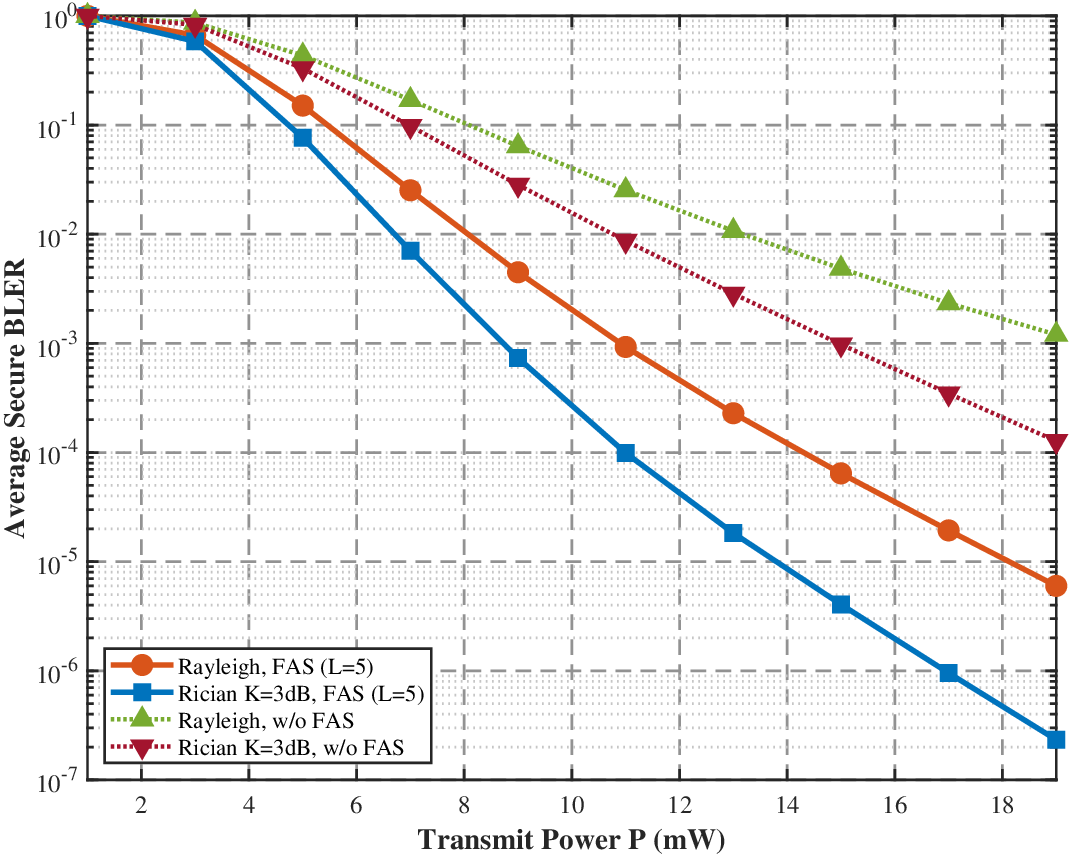}
	\caption{{Average secure BLER versus $P$ under Rayleigh and Rician ($K=3$~dB) fading.}}
	\label{fig_rician}
\end{figure}

{Fig.~\ref{fig_rician} compares the average secure BLER under Rayleigh and Rician ($K=3$~dB) fading for the BS-RIS links, with the RIS-user links remaining Rayleigh (NLOS). The Rician model introduces a line-of-sight (LoS) component in the BS-RIS channels while preserving the same total average channel power. The results demonstrate that: (i) Rician fading yields moderately lower (better) secure BLER than Rayleigh fading across all transmit power levels, as the deterministic LoS component provides a more reliable channel gain; (ii) the FAS diversity advantage persists under both fading models, with FAS-enabled systems consistently outperforming conventional single-antenna systems by approximately one order of magnitude in BLER; (iii) the HI-induced performance ceiling is present under both fading models, confirming that our key design insights---including the diminishing returns of FAS ports and the critical role of hardware quality---remain valid beyond the Rayleigh assumption. These observations validate that the analytical framework and design guidelines developed under Rayleigh fading provide conservative yet representative performance predictions for practical deployments with LoS components.}

\begin{figure*}[t] 
	\centering 
	\includegraphics[width=7.4in]{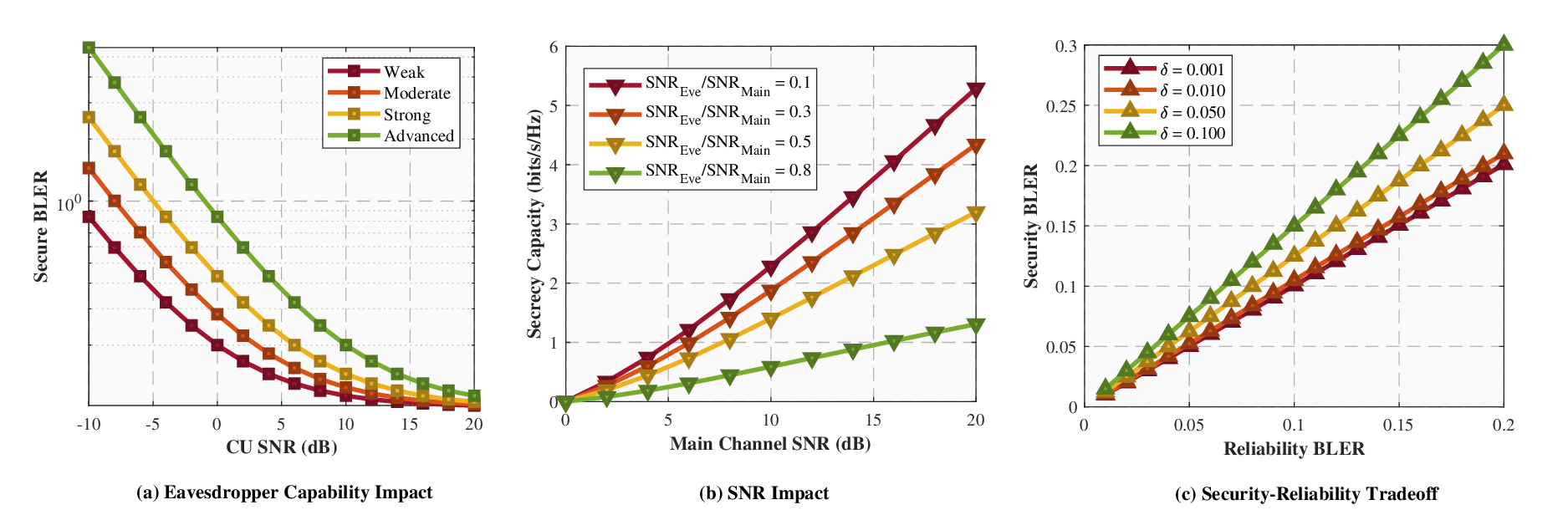}
	\caption{Comprehensive security performance analysis of FAS-RIS NOMA systems.}
	\label{fig9}
\end{figure*}

Fig.~\ref{fig9} presents a comprehensive security performance analysis examining multiple dimensions of secure communication in FAS-RIS NOMA systems. Fig.~\ref{fig9}(a) illustrates eavesdropper capability impact with four threat levels (Weak, Moderate, Strong, Advanced) showing exponential secure BLER degradation as CU SNR increases from -10 to 20 dB. \emph{Advanced eavesdroppers maintain significant threats even at high SNR, while weak adversaries become negligible beyond 10 dB CU SNR. Power control strategies are particularly effective against sophisticated eavesdroppers in low SNR regions.}
Fig.~\ref{fig9}(b) depicts SNR impact analysis with secrecy capacity curves for different eavesdropper-to-main channel SNR ratios (0.1, 0.3, 0.5, 0.8) across 0-20 dB main channel SNR. The curves show logarithmic growth with substantial capacity ($>$4 bits/s/Hz) when eavesdropper channels are weak, but approach zero when channel qualities are comparable. \emph{Channel quality disparity is fundamental to achieving meaningful secrecy rates. Even moderate eavesdropper channel quality severely limits achievable secrecy capacity.}
Fig.~\ref{fig9}(c) presents security-reliability trade-off analysis showing linear relationships between reliability BLER (0-0.2) and security BLER for different information leakage levels $\delta$ (0.001, 0.010, 0.050, 0.100). Stringent security requirements create elevated BLER floors regardless of reliability performance. \emph{Perfect security comes at substantial performance cost, with the trade-off being predictable and optimizable. Moderate security levels offer the best balance for practical applications.}

\section{Conclusion} 

In this paper, we investigated comprehensively the security performance when FAs meet intelligent surfaces in NOMA systems under HIs, revealing how practical RF imperfections fundamentally shape the achievable security performance.   Our comprehensive analysis demonstrates that HIs create performance ceilings that persist regardless of spatial diversity gains, manifesting as exponential BLER floors and diminishing returns beyond optimal FAS port configurations. Despite these limitations, extensive simulations demonstrate that intelligent system design can achieve remarkable security-reliability trade-offs, with key insights including the identification of optimal operating regions, quantification of security-reliability trade-offs under various threat scenarios, and establishment of hardware quality thresholds for practical deployments.  {Additional promising directions include: (i) extending the framework to account for imperfect CSI and channel estimation errors, which would introduce additional performance degradation characterized through mismatched RIS phase configurations and residual interference terms; (ii) incorporating Rician or Nakagami-$m$ fading models for scenarios with dominant line-of-sight components \cite{ZWang25}; (iii) investigating multi-user and multi-cell FAS-RIS NOMA networks, where user scheduling, RIS partitioning, and inter-cell coordination introduce new optimization dimensions \cite{THan25}; (iv) extending the analysis to THz communication bands, which require modifications to account for molecular absorption losses and sparser scattering environments; (v) developing imperfect SIC models as described in Remark~3 and evaluating their impact on security guarantees through the residual interference factor $\zeta$; and (vi) incorporating detailed RIS hardware nonidealities including amplitude-phase coupling and inter-element mutual coupling as discussed in Remark~2; and (vii) exploring joint optimization of communication and computation resources in FAS-RIS aided edge computing systems \cite{TWu8}.}  

\appendices

\section{Critical Points Analysis and Detailed Case Derivations}\label{Ap2}

\subsection{Critical Points Determination}

For the aforementioned 6 cases, the four critical points of $|\gamma_{O}^{(E)}|^{2}$ are the solutions to four equations as follows:
\begin{align} 
\phi _{\tilde {v}}=&\phi _{v},\label{aq47}\\ 
\phi _{\tilde {v}}=&\phi _{u},\label{aq48}\\
\phi _{\tilde {u}}=&\phi _{v},\label{aq49}\\
\phi _{\tilde {u}}=&\phi _{u}.\label{aq50}
\end{align}

Substituting \eqref{aq30} into \eqref{aq47}, we have $\tilde {\beta }-1/(2\tilde{k})=\tilde{\varsigma}_{v}$, where $\tilde{\varsigma}_{v}=\frac{a_C v}{a_E-a_C v}$.
Substituting $\tilde {k}=\sqrt{m}(2\pi \tilde {\beta}(\tilde {\beta}+2))^{-\frac{1}{2}}$ into this equation, after mathematical manipulations, we obtain:
\begin{align} \label{aq52}
(2m-\pi )\tilde {\beta }^{2}-\left ({4m\tilde{\varsigma} _{v}+2\pi }\right )\tilde {\beta }+2m\tilde{\varsigma}_{v}^{2}=0.
\end{align}

From \eqref{aq52}, two solutions of $\tilde {\beta }$ can be obtained. With Proposition 1, after obtaining these solutions, we can determine the solutions of $|\gamma_{O}^{(E)}|^{2}$ to \eqref{aq47}, denoted as $\tau_1$ and $\tau_2$ with $\tau_1\leq \tau_2$, by bisection search. Similarly, the solutions of $|\gamma_{O}^{(E)}|^{2}$ to \eqref{aq49} are also $\tau_1$ and $\tau_2$.

For equations \eqref{aq48} and \eqref{aq50}, we obtain:
\begin{align}\label{aq54}
(2m-\pi )\tilde {\beta }^{2}-\left ({4m\tilde{\varsigma}_{u}+2\pi }\right )\tilde {\beta }+2m\tilde{\varsigma} _{u}^{2}=0,
\end{align}
where $\tilde{\varsigma}_{u}=\frac{a_C u}{a_E-a_C u}$.
Using Proposition 1, the solutions of $|\gamma_{O}^{(E)}|^{2}$ to \eqref{aq54} are denoted as $\tau_3$ and $\tau_4$, with $\tau_3\leq \tau_4$.

\subsection{Integration Interval Analysis}

Based on the relative ordering of the critical points $\tau_1, \tau_2, \tau_3, \tau_4$, six distinct cases arise for the integration intervals in \eqref{aq46}:

{\it{Case 1:}} When $\phi_{\tilde {u}}\leq\phi_{v}$, we have $\tilde {\beta }+1/(2\tilde {k})\leq \tilde{\varsigma}_{v}$, which leads to the inequality $(2m-\pi )\tilde {\beta }^{2}-\left ({4m \tilde{\varsigma} _{v}+2\pi }\right )\tilde {\beta }+2m \tilde{\varsigma}_{v}^{2}\geq 0$ when $0\leq\tilde {\beta }\leq\tilde{\varsigma} _{v}$. The integral interval is $\left ({(0,\tau _{1}]\bigcup [\tau _{2},\infty )}\right )\bigcap [\tau _{0},  {\tau }_{v}]$.

{\it{Case 2:}} The integral interval is $\Theta _{1}\bigcap\Theta _{2}\bigcap\Theta _{3}$, where
\begin{align}
 \Theta _{1}=&\left ({[\tau _{1}, \tau _{2}]\bigcap [\tau _{0}, {\tau }_{v}]}\right )\bigcup [\bar {\tau }_{v}, \infty ),\\
  \Theta _{2}=&\left ({[\tau _{1}, \tau _{2}]\bigcap [{\tau }_{v}, \infty )}\right )\bigcup [\tau _{0}, {\tau }_{v}],\\
   \Theta _{3}=&\left ({(0,\tau _{3}]\bigcup [\tau _{4},\infty )}\right )\bigcap [\tau _{0}, {\tau }_{u}].
 \end{align}
 
{\it{Case 3:}} The integral interval is $\Theta _{1}\bigcap\Theta _{4}$, where $\Theta _{4}=\left ({[\tau _{3}, \tau _{4}]\bigcap [\tau _{0},  {\tau }_{u}]}\right )\bigcup \big[{\tau }_{u}, \infty \big )$.

{\it{Case 4:}} The integral interval is $\Theta _{3}\bigcap\Theta _{5}$, where $\Theta _{5}=\left ({(0,\tau _{1}]\bigcup [\tau _{2},\infty )}\right )\bigcap \big[ {\tau }_{v}, \infty \big)$.

{\it{Case 5:}} The integral interval is $\Theta _{4}\bigcap\Theta _{5}\bigcap\Theta _{6}$, where $\Theta _{6}=\left ({[\tau _{3}, \tau _{4}]\bigcap \big[ {\tau }_{u}, \infty \big )}\right )\bigcup [\tau _{0},  {\tau }_{u}]$.

{\it{Case 6:}} The integral interval is $\left ({(0,\tau _{3}]\bigcup [\tau _{4},\infty )}\right )\bigcap \big[{\tau }_{u}, \infty \big )$.

\subsection{Detailed Case Derivations}

The computation of $\mathbb{E}[\Psi \Phi |\gamma_{E,C}]$ requires careful analysis of different ordering relationships among the critical values $\phi _{\tilde {v}}$, $\phi _{\tilde {u}}$, $\phi _{v}$, and $\phi _{u}$.

{\it{Case 1:}} When $\phi _{\tilde {v}}<\phi_{\tilde {u}}\leq\phi_{v}\leq\phi_{u}$, substituting the linear approximations into the conditional expectation and utilizing mathematical manipulations, we have
\begin{align}
&\mathbb {E}[\Psi \Phi |\gamma _{E,C}]\approx \left(\frac {1}{2}+\tilde {k}\tilde {\beta }\right)\left(F_{|\gamma_{O}^{(C)}|^{2}}\left(\phi _{\tilde {v}}\right)+F_{|\gamma_{O}^{(C)}|^{2}}\left(\phi _{\tilde {u}}\right)\right)\nonumber\\
&+\sum ^{U}_{p=1}\frac {\pi }{U}\frac{\sqrt {1-t_{p}^{2}}}{2}{\frac{a_C P\left(d_{SR_1}d_{R_1C}\right)^{-\alpha}y_p^{(1)}}{\varrho_C^2 P \left(d_{SR_1}d_{R_1C}\right)^{-\alpha}y_p^{(1)}+\sigma^2}}f_{|\gamma_{O}^{(C)}|^{2}}(y_p^{(1)}),
\end{align}
where $y_p^{(1)}=\frac{1}{2\tilde {k}}t_p+\frac{\tilde {\beta }}{2}$.

{\it{Case 2:}} When $\phi _{\tilde {v}}\leq\phi_{\tilde {u}}<\phi_{v}\leq\phi_{u}$,  we have
\begin{align}
&\mathbb {E}[\Psi \Phi |\gamma _{E,C}] \approx   F_{|\gamma_{O}^{(C)}|^{2}}(\phi _{\tilde {v}})+\frac{\phi _{v}-\phi _{\tilde {v}}}{2}\sum ^{U}_{p=1}\frac {\pi }{U}\sqrt {1-t_{p}^{2}} \nonumber\\
&\cdot \left ({\frac {1}{2}-\tilde {k}\left ({\frac{a_C P\left(d_{SR_1}d_{R_1C}\right)^{-\alpha}y_p^{(2)}}{\varrho_C^2 P \left(d_{SR_1}d_{R_1C}\right)^{-\alpha}y_p^{(2)}+\sigma^2}-\tilde {\beta }}\right )}\right )\nonumber\\
 & \cdot  f_{|\gamma_{O}^{(C)}|^{2}}(y_p^{(2)})+\frac{\phi _{\tilde {u}}-\phi _{v}}{2}\sum ^{U}_{p=1}\frac {\pi }{U}\sqrt {1-t_{p}^{2}} \nonumber\\
&\cdot \left ({\frac {1}{2}-\tilde {k}\left ({\frac{a_C P\left(d_{SR_1}d_{R_1C}\right)^{-\alpha}y_p^{(3)}}{\varrho_C^2 P \left(d_{SR_1}d_{R_1C}\right)^{-\alpha}y_p^{(3)}+\sigma^2}-\tilde {\beta }}\right )}\right ) \nonumber\\
&\cdot \left ({\frac {1}{2}-k\left ({\frac{a_E P\left(d_{SR_1}d_{R_1C}\right)^{-\alpha}y_p^{(3)}}{(a_C+\varrho_C^2) P \left(d_{SR_1}d_{R_1C}\right)^{-\alpha}y_p^{(3)}+\sigma^2}-\beta }\right )}\right )\nonumber\\
&\cdot f_{|\gamma_{O}^{(C)}|^{2}}(y_p^{(3)}),
\end{align}
where $y_p^{(2)}=\frac{\phi _{v}-\phi _{\tilde {v}}}{2}t_p+\frac{\phi _{v}+\phi _{\tilde {v}}}{2}$ and $y_p^{(3)}=\frac{\phi _{\tilde {u}}-\phi _{v}}{2}t_p+\frac{\phi _{\tilde {u}}+\phi _{v}}{2}$.

{\it{Case 3:}} When $\phi _{\tilde {v}}\leq\phi_{v}\leq\phi_{u}<\phi_{\tilde {u}}$, we have
\begin{align}
&\mathbb {E}[\Psi \Phi |\gamma _{E,C}] \approx   F_{|\gamma_{O}^{(C)}|^{2}}(\phi _{\tilde {v}})+\frac{\phi _{v}-\phi _{\tilde {v}}}{2}\sum ^{U}_{p=1}\frac {\pi }{U}\sqrt {1-t_{p}^{2}} \nonumber\\
&\cdot \left ({\frac {1}{2}-\tilde {k}\left ({\frac{a_C P\left(d_{SR_1}d_{R_1C}\right)^{-\alpha}y_p^{(2)}}{\varrho_C^2 P \left(d_{SR_1}d_{R_1C}\right)^{-\alpha}y_p^{(2)}+\sigma^2}-\tilde {\beta }}\right )}\right )\nonumber\\
 & \cdot  f_{|\gamma_{O}^{(C)}|^{2}}(y_p^{(2)})+\frac{\phi _{\tilde {u}}-\phi _{v}}{2}\sum ^{U}_{p=1}\frac {\pi }{U}\sqrt {1-t_{p}^{2}} \nonumber\\
&\cdot \left ({\frac {1}{2}-\tilde {k}\left ({\frac{a_C P\left(d_{SR_1}d_{R_1C}\right)^{-\alpha}y_p^{(4)}}{\varrho_C^2 P \left(d_{SR_1}d_{R_1C}\right)^{-\alpha}y_p^{(4)}+\sigma^2}-\tilde {\beta }}\right )}\right ) \nonumber\\
&\cdot \left ({\frac {1}{2}-k\left ({\frac{a_E P\left(d_{SR_1}d_{R_1C}\right)^{-\alpha}y_p^{(4)}}{(a_C+\varrho_C^2) P \left(d_{SR_1}d_{R_1C}\right)^{-\alpha}y_p^{(4)}+\sigma^2}-\beta }\right )}\right )\nonumber\\
&\cdot f_{|\gamma_{O}^{(C)}|^{2}}(y_p^{(4)}),
\end{align}
where $y_p^{(4)}=\frac{1}{2k}t_p+\frac{\beta}{2}$.

{\it{Case 4:}} When $\phi _{\tilde {v}}\leq\phi_{v}<\phi_{\tilde {u}}\leq\phi_{u}$,   we have
\begin{align}
&\mathbb {E}[\Psi \Phi |\gamma _{E,C}] \approx   F_{|\gamma_{O}^{(C)}|^{2}}(\phi _{v})+\frac{\phi _{v}-\phi _{\tilde {v}}}{2}\sum ^{U}_{p=1}\frac {\pi }{U}\sqrt {1-t_{p}^{2}} \nonumber\\
&\cdot \left ({\frac {1}{2}-\tilde {k}\left ({\frac{a_C P\left(d_{SR_1}d_{R_1C}\right)^{-\alpha}y_p^{(5)}}{\varrho_C^2 P \left(d_{SR_1}d_{R_1C}\right)^{-\alpha}y_p^{(5)}+\sigma^2}-\tilde {\beta }}\right )}\right )\nonumber\\
 & \cdot  f_{|\gamma_{O}^{(C)}|^{2}}(y_p^{(5)})+\frac{\phi _{\tilde {u}}-\phi _{v}}{2}\sum ^{U}_{p=1}\frac {\pi }{U}\sqrt {1-t_{p}^{2}} \nonumber\\
&\cdot \left ({\frac {1}{2}-\tilde {k}\left ({\frac{a_C P\left(d_{SR_1}d_{R_1C}\right)^{-\alpha}y_p^{(1)}}{\varrho_C^2 P \left(d_{SR_1}d_{R_1C}\right)^{-\alpha}y_p^{(1)}+\sigma^2}-\tilde {\beta }}\right )}\right ) \nonumber\\
&\cdot \left ({\frac {1}{2}-k\left ({\frac{a_E P\left(d_{SR_1}d_{R_1C}\right)^{-\alpha}y_p^{(1)}}{(a_C+\varrho_C^2) P \left(d_{SR_1}d_{R_1C}\right)^{-\alpha}y_p^{(1)}+\sigma^2}-\beta }\right )}\right )\nonumber\\
&\cdot f_{|\gamma_{O}^{(C)}|^{2}}(y_p^{(1)}),
\end{align}
where $y_p^{(5)}=\frac{\phi _{\tilde {v}}-\phi _{v}}{2}t_p+\frac{\phi _{v}+\phi _{\tilde {v}}}{2}$.

{\it{Case 5:}} When $\phi_{v}\leq\phi _{\tilde {v}}\leq\phi_{u}<\phi_{\tilde {u}}$,  we have
\begin{align}
&\mathbb {E}[\Psi \Phi |\gamma _{E,C}] \approx   F_{|\gamma_{O}^{(C)}|^{2}}(\phi _{v})+\frac{\phi _{v}-\phi _{\tilde {v}}}{2}\sum ^{U}_{p=1}\frac {\pi }{U}\sqrt {1-t_{p}^{2}} \nonumber\\
&\cdot \left ({\frac {1}{2}-\tilde {k}\left ({\frac{a_C P\left(d_{SR_1}d_{R_1C}\right)^{-\alpha}y_p^{(5)}}{\varrho_C^2 P \left(d_{SR_1}d_{R_1C}\right)^{-\alpha}y_p^{(5)}+\sigma^2}-\tilde {\beta }}\right )}\right )\nonumber\\
 & \cdot  f_{|\gamma_{O}^{(C)}|^{2}}(y_p^{(5)})+\frac{\phi _{\tilde {u}}-\phi _{v}}{2}\sum ^{U}_{p=1}\frac {\pi }{U}\sqrt {1-t_{p}^{2}} \nonumber\\
&\cdot \left ({\frac {1}{2}-\tilde {k}\left ({\frac{a_C P\left(d_{SR_1}d_{R_1C}\right)^{-\alpha}y_p^{(6)}}{\varrho_C^2 P \left(d_{SR_1}d_{R_1C}\right)^{-\alpha}y_p^{(6)}+\sigma^2}-\tilde {\beta }}\right )}\right ) \nonumber\\
&\cdot \left ({\frac {1}{2}-k\left ({\frac{a_E P\left(d_{SR_1}d_{R_1C}\right)^{-\alpha}y_p^{(6)}}{(a_C+\varrho_C^2) P \left(d_{SR_1}d_{R_1C}\right)^{-\alpha}y_p^{(6)}+\sigma^2}-\beta }\right )}\right )\nonumber\\
&\cdot f_{|\gamma_{O}^{(C)}|^{2}}(y_p^{(6)}),
\end{align}
where $y_p^{(6)}=\frac{\phi _{u}-\phi _{\tilde {v}}}{2}t_p+\frac{\phi _{u}+\phi _{\tilde {v}}}{2}$.

{\it{Case 6:}} When $\phi_{v}\leq\phi_{u}\leq\phi _{\tilde {v}}\leq\phi_{\tilde {u}}$, utilizing the Gauss-Chebyshev integral, we have
\begin{align}
&\mathbb {E}[\Psi \Phi |\gamma _{E,C}]\approx \left(\frac {1}{2}+k {\beta }\right)\left(F_{|\gamma_{O}^{(C)}|^{2}}\left(\phi _{v}\right)-F_{|\gamma_{O}^{(C)}|^{2}}\left(\phi _{u}\right)\right)\nonumber\\
&+\sum ^{U}_{p=1}\frac {\pi }{U}\frac{\sqrt {1-t_{p}^{2}}}{2}{\frac{a_C P\left(d_{SR_1}d_{R_1C}\right)^{-\alpha}y_p^{(4)}}{\varrho_C^2 P \left(d_{SR_1}d_{R_1C}\right)^{-\alpha}y_p^{(4)}+\sigma^2}}f_{|\gamma_{O}^{(C)}|^{2}}(y_p^{(4)}).
\end{align}

The derivations for $\mathbb{E}[\Psi \Xi |\gamma_{E,C}]$ follow a similar structure with Cases 7-12, where the analysis proceeds analogously by considering the relative ordering of the corresponding critical values.

\section{Proof of Proposition 1}\label{Ap1}
From \cite{LaiX21}, the first order derivative of $\tilde{\beta}$ with respective to $\gamma_{E,C}$ is given by
\begin{align}
&\frac {\partial \tilde {\beta }}{\partial \gamma_{E,C}} =\exp \left ({\frac {\mu V^{\frac {1}{2}}(\gamma_{E,C})}{\sqrt {m}}\ln 2+\frac {N_{c}}{m}\ln 2}\right ) \\&\cdot \left [{1+\frac {\mu }{\sqrt {m}(1-(1+\gamma _{E,C})^{-2})^{\frac {1}{2}}(1+\gamma _{E,C})^{3}}}\right ].
\end{align}
Since $\mu=Q^{-1}(\delta)\geq 0$ when $\delta\in [0,12]$, we have
\begin{align}
\frac {\partial \tilde {\beta }}{\partial \gamma_{E,C}}\geq 0.
\end{align}
The expression $\tilde{\beta}$ is a monotonically increasing function with respect to $\gamma_{E,C}$. Furthermore, 
\begin{align}
	&\frac {\partial \gamma_{E,C}}{\partial |\gamma_{O}^{(E)}|^{2}} =\frac{1}{\left(\varrho_E^2 P \left(d_{SR_2}d_{R_2E}\right)^{-\alpha}\left|\gamma^{(E)}_O\right|^2+\sigma^2\right)^2}\geq 0.
\end{align}
Thus, $\gamma_{E,C}$ is a monotonically increasing function with respect to $|\gamma_{O}^{(E)}|^{2}$. Thus, $\tilde{\beta}$ is a monotonically increasing function with respect to $|\gamma_{O}^{(E)}|^{2}$.

\section{Detailed Derivations for Cases 7-12}\label{Ap3}
 
\subsection{Critical Points for Cases 7-12}

For Cases 7–12, the four critical points of $|\gamma_{O}^{(E)}|^{2}$ are the solutions to four equations as follows:
\begin{align} 
	\bar {\phi }_{\tilde {v}}=&\phi _{v},\label{aq85}\\ 
	\bar {\phi }_{\tilde {v}}=&\phi _{u}, \label{aq86}\\ 
	\bar {\phi }_{\tilde {u}}=&\phi _{v}, \label{aq87}\\ 
	\bar {\phi }_{\tilde {u}}=&\phi _{u}. \label{aq88}
\end{align}

Substituting the general form \eqref{aq30} into \eqref{aq85}, we have $\tilde {\beta }-1/(2\tilde {k})=\varsigma _{v}$, where $\varsigma _{v}=\frac {v a_C}{a_E(1+v)-a_C v}$.
Substituting $\tilde {k}=\sqrt{m}(2\pi \tilde {\beta}(\tilde {\beta}+2))^{-\frac{1}{2}}$ into this equation, after mathematical manipulations, we obtain:
\begin{align} \label{aq91}
	(2m-\pi )\tilde {\beta }^{2}-\left ({4m\varsigma _{v}+2\pi }\right )\tilde {\beta }+2m\varsigma _{v}^{2}=0.
\end{align}

Using Proposition 1, after obtaining two solutions of $\tilde {\beta }$, we obtain the solutions of $|\gamma_{O}^{(E)}|^{2}$ to \eqref{aq91}, denoted as $\tau_5$ and $\tau_6$ with $\tau_5\leq \tau_6$, by bisection search. Similarly, the solutions of $|\gamma_{O}^{(E)}|^{2}$ to \eqref{aq87} are also $\tau_5$ and $\tau_6$.

For equations \eqref{aq86} and \eqref{aq88}, we obtain:
\begin{align}\label{aq92}
	(2m-\pi )\tilde {\beta }^{2}-\left ({4m\varsigma _{u}+2\pi }\right )\tilde {\beta }+2m\varsigma _{u}^{2}=0,
\end{align}
where $\varsigma _{u}=\frac {u a_C}{a_E(1+u)-a_C u}$.
Using Proposition 1, the solutions of $|\gamma_{O}^{(E)}|^{2}$ to \eqref{aq92} are denoted as $\tau_7$ and $\tau_8$, with $\tau_7\leq \tau_8$.

\subsection{Detailed Case Derivations}

{\it{Case 7:}} When $\bar {\phi }_{\tilde {v}}<\bar{\phi }_{\tilde {u}}\leq\phi_{v}\leq\phi_{u}$, substituting the linear approximations into the conditional expectation, we have
\begin{align}
	&\mathbb {E}[\Psi \Xi |\gamma _{E,C}]\approx \left(\frac {1}{2}+\tilde {k}\tilde {\beta }\right)\left(F_{|\gamma_{O}^{(C)}|^{2}}\left(\bar {\phi }_{\tilde {v}}\right)+F_{|\gamma_{O}^{(C)}|^{2}}\left(\bar {\phi }_{\tilde {u}}\right)\right)\nonumber\\
	&+\sum ^{U}_{p=1}\frac {\pi }{U}\frac{\sqrt {1-t_{p}^{2}}}{2}{\frac{a_C P\left(d_{SR_1}d_{R_1C}\right)^{-\alpha}y_p^{(1)}}{\varrho_C^2 P \left(d_{SR_1}d_{R_1C}\right)^{-\alpha}y_p^{(1)}+\sigma^2}}f_{|\gamma_{O}^{(C)}|^{2}}(y_p^{(1)}),
\end{align}
where $y_p^{(1)}=\frac{1}{2\tilde {k}}t_p+\frac{\tilde {\beta }}{2}$.

{\it{Case 8:}} When $\bar {\phi }_{\tilde {v}}\leq\bar{\phi }_{\tilde {u}}<\phi_{v}\leq\phi_{u}$, we have
\begin{align}
	&\mathbb {E}[\Psi \Xi |\gamma _{E,C}] \approx   F_{|\gamma_{O}^{(C)}|^{2}}(\bar {\phi }_{\tilde {v}})+\frac{\phi _{v}-\bar {\phi }_{\tilde {v}}}{2}\sum ^{U}_{p=1}\frac {\pi }{U}\sqrt {1-t_{p}^{2}} \nonumber\\
	&\cdot \left ({\frac {1}{2}-\tilde {k}\left ({\frac{a_C P\left(d_{SR_1}d_{R_1C}\right)^{-\alpha}y_p^{(7)}}{\varrho_C^2 P \left(d_{SR_1}d_{R_1C}\right)^{-\alpha}y_p^{(7)}+\sigma^2}-\tilde {\beta }}\right )}\right )\nonumber\\
	& \cdot  f_{|\gamma_{O}^{(C)}|^{2}}(y_p^{(7)})+\frac{\bar {\phi }_{\tilde {u}}-\phi _{v}}{2}\sum ^{U}_{p=1}\frac {\pi }{U}\sqrt {1-t_{p}^{2}} \nonumber\\
	&\cdot \left ({\frac {1}{2}-\tilde {k}\left ({\frac{a_C P\left(d_{SR_1}d_{R_1C}\right)^{-\alpha}y_p^{(8)}}{\varrho_C^2 P \left(d_{SR_1}d_{R_1C}\right)^{-\alpha}y_p^{(8)}+\sigma^2}-\tilde {\beta }}\right )}\right ) \nonumber\\
	&\cdot \left ({\frac {1}{2}-k\left ({\frac{a_E P\left(d_{SR_1}d_{R_1C}\right)^{-\alpha}y_p^{(8)}}{(a_C+\varrho_C^2) P \left(d_{SR_1}d_{R_1C}\right)^{-\alpha}y_p^{(8)}+\sigma^2}-\beta }\right )}\right )\nonumber\\
	&\cdot f_{|\gamma_{O}^{(C)}|^{2}}(y_p^{(8)}),
\end{align}
where $y_p^{(7)}=\frac{\phi _{v}-\bar {\phi }_{\tilde {v}}}{2}t_p+\frac{\phi _{v}+\bar {\phi }_{\tilde {v}}}{2}$ and $y_p^{(8)}=\frac{\bar {\phi }_{\tilde {u}}-\phi _{v}}{2}t_p+\frac{\bar {\phi }_{\tilde {u}}+\phi _{v}}{2}$.

{\it{Case 9:}} When $\bar {\phi }_{\tilde {v}}\leq\phi_{v}\leq\phi_{u}<\bar{\phi }_{\tilde {u}}$, we have
\begin{align}
	&\mathbb {E}[\Psi \Xi |\gamma _{E,C}] \approx   F_{|\gamma_{O}^{(C)}|^{2}}(\bar {\phi }_{\tilde {v}})+\frac{\phi _{v}-\bar {\phi }_{\tilde {v}}}{2}\sum ^{U}_{p=1}\frac {\pi }{U}\sqrt {1-t_{p}^{2}} \nonumber\\
	&\cdot \left ({\frac {1}{2}-\tilde {k}\left ({\frac{a_C P\left(d_{SR_1}d_{R_1C}\right)^{-\alpha}y_p^{(7)}}{\varrho_C^2 P \left(d_{SR_1}d_{R_1C}\right)^{-\alpha}y_p^{(7)}+\sigma^2}-\tilde {\beta }}\right )}\right )\nonumber\\
	& \cdot  f_{|\gamma_{O}^{(C)}|^{2}}(y_p^{(7)})+\frac{\phi _{u}-\phi _{v}}{2}\sum ^{U}_{p=1}\frac {\pi }{U}\sqrt {1-t_{p}^{2}} \nonumber\\
	&\cdot \left ({\frac {1}{2}-\tilde {k}\left ({\frac{a_C P\left(d_{SR_1}d_{R_1C}\right)^{-\alpha}y_p^{(4)}}{\varrho_C^2 P \left(d_{SR_1}d_{R_1C}\right)^{-\alpha}y_p^{(4)}+\sigma^2}-\tilde {\beta }}\right )}\right ) \nonumber\\
	&\cdot \left ({\frac {1}{2}-k\left ({\frac{a_E P\left(d_{SR_1}d_{R_1C}\right)^{-\alpha}y_p^{(4)}}{(a_C+\varrho_C^2) P \left(d_{SR_1}d_{R_1C}\right)^{-\alpha}y_p^{(4)}+\sigma^2}-\beta }\right )}\right )\nonumber\\
	&\cdot f_{|\gamma_{O}^{(C)}|^{2}}(y_p^{(4)}),
\end{align}
where $y_p^{(4)}=\frac{1}{2k}t_p+\frac{\beta}{2}$.

{\it{Case 10:}} When $\bar {\phi }_{\tilde {v}}\leq\phi_{v}<\bar{\phi }_{\tilde {u}}\leq\phi_{u}$, we have
\begin{align}
	&\mathbb {E}[\Psi \Xi |\gamma _{E,C}] \approx   F_{|\gamma_{O}^{(C)}|^{2}}(\phi _{v})+\frac{\bar {\phi }_{\tilde {v}}-\phi _{v}}{2}\sum ^{U}_{p=1}\frac {\pi }{U}\sqrt {1-t_{p}^{2}} \nonumber\\
	&\cdot \left ({\frac {1}{2}-\tilde {k}\left ({\frac{a_C P\left(d_{SR_1}d_{R_1C}\right)^{-\alpha}y_p^{(9)}}{\varrho_C^2 P \left(d_{SR_1}d_{R_1C}\right)^{-\alpha}y_p^{(9)}+\sigma^2}-\tilde {\beta }}\right )}\right )\nonumber\\
	& \cdot  f_{|\gamma_{O}^{(C)}|^{2}}(y_p^{(9)})+\frac{\bar {\phi }_{\tilde {u}}-\bar {\phi }_{\tilde {v}}}{2}\sum ^{U}_{p=1}\frac {\pi }{U}\sqrt {1-t_{p}^{2}} \nonumber\\
	&\cdot \left ({\frac {1}{2}-\tilde {k}\left ({\frac{a_C P\left(d_{SR_1}d_{R_1C}\right)^{-\alpha}y_p^{(1)}}{\varrho_C^2 P \left(d_{SR_1}d_{R_1C}\right)^{-\alpha}y_p^{(1)}+\sigma^2}-\tilde {\beta }}\right )}\right ) \nonumber\\
	&\cdot \left ({\frac {1}{2}-k\left ({\frac{a_E P\left(d_{SR_1}d_{R_1C}\right)^{-\alpha}y_p^{(1)}}{(a_C+\varrho_C^2) P \left(d_{SR_1}d_{R_1C}\right)^{-\alpha}y_p^{(1)}+\sigma^2}-\beta }\right )}\right )\nonumber\\
	&\cdot f_{|\gamma_{O}^{(C)}|^{2}}(y_p^{(1)}),
\end{align}
where $y_p^{(9)}=\frac{\bar {\phi }_{\tilde {v}}-\phi _{v}}{2}t_p+\frac{\phi _{v}+\bar {\phi }_{\tilde {v}}}{2}$.

{\it{Case 11:}} When $\phi_{v}\leq\bar {\phi }_{\tilde {v}}\leq\phi_{u}<\bar{\phi }_{\tilde {u}}$, we have
\begin{align}
	&\mathbb {E}[\Psi \Xi |\gamma _{E,C}] \approx   F_{|\gamma_{O}^{(C)}|^{2}}(\phi _{v})+\frac{\bar {\phi }_{\tilde {v}}-\phi _{v}}{2}\sum ^{U}_{p=1}\frac {\pi }{U}\sqrt {1-t_{p}^{2}} \nonumber\\
	&\cdot \left ({\frac {1}{2}-\tilde {k}\left ({\frac{a_C P\left(d_{SR_1}d_{R_1C}\right)^{-\alpha}y_p^{(9)}}{\varrho_C^2 P \left(d_{SR_1}d_{R_1C}\right)^{-\alpha}y_p^{(9)}+\sigma^2}-\tilde {\beta }}\right )}\right )\nonumber\\
	& \cdot  f_{|\gamma_{O}^{(C)}|^{2}}(y_p^{(9)})+\frac{\phi _{u}-\bar {\phi }_{\tilde {v}}}{2}\sum ^{U}_{p=1}\frac {\pi }{U}\sqrt {1-t_{p}^{2}} \nonumber\\
	&\cdot \left ({\frac {1}{2}-\tilde {k}\left ({\frac{a_C P\left(d_{SR_1}d_{R_1C}\right)^{-\alpha}y_p^{(10)}}{\varrho_C^2 P \left(d_{SR_1}d_{R_1C}\right)^{-\alpha}y_p^{(10)}+\sigma^2}-\tilde {\beta }}\right )}\right ) \nonumber\\
	&\cdot \left ({\frac {1}{2}-k\left ({\frac{a_E P\left(d_{SR_1}d_{R_1C}\right)^{-\alpha}y_p^{(10)}}{(a_C+\varrho_C^2) P \left(d_{SR_1}d_{R_1C}\right)^{-\alpha}y_p^{(10)}+\sigma^2}-\beta }\right )}\right )\nonumber\\
	&\cdot f_{|\gamma_{O}^{(C)}|^{2}}(y_p^{(10)}),
\end{align}
where $y_p^{(10)}=\frac{\phi _{u}-\bar {\phi }_{\tilde {v}}}{2}t_p+\frac{\phi _{u}+\bar {\phi }_{\tilde {v}}}{2}$.

{\it{Case 12:}} When $\phi_{v}\leq\phi_{u}\leq\bar {\phi }_{\tilde {v}}\leq\bar{\phi }_{\tilde {u}}$, we have
\begin{align}
	&\mathbb {E}[\Psi \Xi |\gamma _{E,C}]\approx \left(\frac {1}{2}+k {\beta }\right)\left(F_{|\gamma_{O}^{(C)}|^{2}}\left(\phi _{v}\right)+F_{|\gamma_{O}^{(C)}|^{2}}\left(\phi _{u}\right)\right)\nonumber\\
	&+\sum ^{U}_{p=1}\frac {\pi }{U}\frac{\sqrt {1-t_{p}^{2}}}{2}{\frac{a_C P\left(d_{SR_1}d_{R_1C}\right)^{-\alpha}y_p^{(4)}}{\varrho_C^2 P \left(d_{SR_1}d_{R_1C}\right)^{-\alpha}y_p^{(4)}+\sigma^2}}f_{|\gamma_{O}^{(C)}|^{2}}(y_p^{(4)}).
\end{align}

\subsection{Integration Interval Analysis for Cases 7-12}

For Cases 7-12, the integral intervals in \eqref{aq70} are determined based on the relative ordering of the critical points $\tau_5, \tau_6, \tau_7, \tau_8$:

{\it{Case 7:}} When $\bar{\phi}_{\tilde {u}}\leq\phi_{v}$, the integral interval is $\left ({(0,\tau _{5}]\bigcup [\tau _{6},\infty )}\right )\bigcap [\tau _{0}, \bar {\tau }_{v}]$.

{\it{Case 8:}} The integral interval is $\Theta _{7}\bigcap\Theta _{8}\bigcap\Theta _{9}$, where
\begin{align}
	\Theta _{7}=&\left ({[\tau _{5}, \tau _{6}]\bigcap [\tau _{0}, \bar {\tau }_{v}]}\right )\bigcup [\bar {\tau }_{v}, \infty ),\\
	\Theta _{8}=&\left ({[\tau _{5}, \tau _{6}]\bigcap [\bar {\tau }_{v}, \infty )}\right )\bigcup [\tau _{0}, \bar {\tau }_{v}],\\ 
	\Theta _{9}=&\left ({(0,\tau _{7}]\bigcup [\tau _{8},\infty )}\right )\bigcap [\tau _{0}, \bar {\tau }_{u}].
\end{align}

{\it{Case 9:}} The integral interval is $\Theta _{7}\bigcap\Theta _{10}$, where $\Theta _{10}=\left ({[\tau _{7}, \tau _{8}]\bigcap [\tau _{0}, \bar {\tau }_{u}]}\right )\bigcup \big[\bar {\tau }_{u}, \infty \big )$.

{\it{Case 10:}} The integral interval is $\Theta _{9}\bigcap\Theta _{11}$, where $\Theta _{11}=\left ({(0,\tau _{5}]\bigcup [\tau _{6},\infty )}\right )\bigcap \big[\bar {\tau }_{v}, \infty \big)$.

{\it{Case 11:}} The integral interval is $\Theta _{10}\bigcap\Theta _{11}\bigcap\Theta _{12}$, where $\Theta _{12}=\left ({[\tau _{7}, \tau _{8}]\bigcap \big[\bar {\tau }_{u}, \infty \big )}\right )\bigcup [\tau _{0}, \bar {\tau }_{u}]$.

{\it{Case 12:}} The integral interval is $\left ({(0,\tau _{7}]\bigcup [\tau _{8},\infty )}\right )\bigcap \big[\bar {\tau }_{u}, \infty \big )$.

\end{document}